\newtheorem{theorem}{Theorem}
\newtheorem{definition}{Definition}
\newtheorem{lemma}{Lemma}
\newtheorem{corollary}{Corollary}
\newcommand{\rev}[1]{{\color{blue}#1}} 
\newcommand{\com}[1]{\textbf{\color{red} (COMMENT: #1) }} 
\newcommand{\comg}[1]{\textbf{\color{green} (COMMENT: #1)}}
\newcommand{\response}[1]{\textbf{\color{green} (RESPONSE: #1)}} 
\newcommand{\rev}[1]{#1}
\newcommand{\com}[1]{}
\newcommand{\comg}[1]{}
\newcommand{\response}[1]{}
\begin{document}

\title{Efficient Multi-User Computation Offloading for Mobile-Edge Cloud Computing}

\author{Xu Chen, \emph{Member, IEEE},  Lei Jiao, \emph{Member, IEEE}, Wenzhong Li, \emph{Member, IEEE}, and Xiaoming Fu, \emph{Senior Member, IEEE} 
%
%
%
%
%
}

\maketitle
\pagestyle{empty}
\thispagestyle{empty}

\allowdisplaybreaks

\begin{abstract}
Mobile-edge cloud computing is a new paradigm to provide cloud computing capabilities at the edge of pervasive radio access networks in close proximity to mobile users. \rev{In this paper, we first study the multi-user computation offloading problem for mobile-edge cloud computing in a multi-channel wireless interference environment.} We show that it is NP-hard to compute a centralized optimal solution, and hence adopt a game theoretic approach for achieving efficient computation offloading in a distributed manner. We formulate the distributed computation offloading decision making problem among mobile device users as a multi-user computation offloading game. We analyze the structural property of the game and show that the game admits a Nash equilibrium and possesses the finite improvement property. We then design a distributed computation offloading algorithm that can achieve a Nash equilibrium, derive the upper bound of the convergence time, and quantify its efficiency ratio over the centralized optimal solutions in terms of two important performance metrics. \rev{We further extend our study to the scenario of multi-user computation offloading in the multi-channel wireless contention environment.} Numerical results corroborate that the proposed algorithm can achieve superior computation offloading performance and scale well as the user size increases.
\end{abstract}

\begin{IEEEkeywords}
Mobile-Edge Cloud Computing, Computation Offloading, Nash Equilibrium, Game Theory
\end{IEEEkeywords}

\section{Introduction}

As smartphones are gaining enormous popularity, more and more new
mobile applications such as face recognition, natural language processing,
interactive gaming, and augmented reality are emerging and attract
great attention \cite{kumar2010cloud,soyata2012cloud,cohen2008embedded}. This kind of mobile applications
are typically resource-hungry, demanding intensive computation and
high energy consumption. Due to the physical size constraint, however,
mobile devices are in general resource-constrained, having limited
computation resources and battery life. The tension between
resource-hungry applications and resource-constrained mobile devices
hence poses a significant challenge for the future mobile platform development \cite{cuervo2010maui}.


Mobile cloud computing is envisioned as a promising
approach to address such a challenge. By offloading the
computation via wireless access to the resource-rich cloud
infrastructure, mobile cloud computing can augment the capabilities
of mobile devices for resource-hungry applications.
One possible approach is to offload the computation to the remote
public clouds such as Amazon EC2 and Windows Azure.
However, an evident weakness of public cloud based mobile
cloud computing is that mobile users may experience long
latency for data exchange with the public cloud through the
wide area network. Long latency would hurt the interactive
response, since humans are acutely sensitive to delay and jitter.
Moreover, it is very difficult to reduce the latency in the wide
area network. To overcome this limitation, the cloudlet based mobile cloud computing was proposed as a promising solution \cite{satyanarayanan2009case}. Rather than relying on a remote cloud, the cloudlet based
mobile cloud computing leverages the physical proximity to reduce delay by offloading the computation to the nearby computing sever/cluster via one-hop WiFi wireless access. However, there are two major disadvantages for the cloudlet based mobile cloud computing: 1) due to limited coverage of WiFi networks (typically available for indoor environments), cloudlet based mobile cloud computing can not guarantee  ubiquitous service provision everywhere; 2) due to space constraint, cloudlet based mobile cloud computing usually utilizes a computing sever/cluster with small/medium computation resources, which may not satisfy QoS of a large number of users.

\begin{figure}[t]
\centering
\includegraphics[scale=0.45]{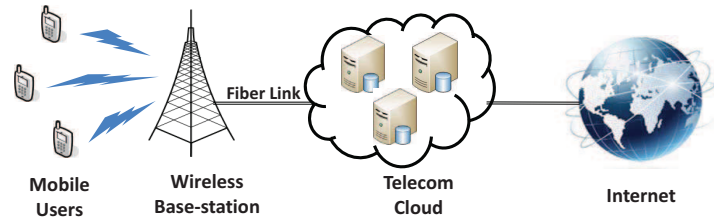}
\caption{\label{fig:An-illustration-of}An illustration of mobile-edge cloud computing }
\end{figure}

To address these challenges and complement cloudlet based mobile cloud computing, a novel mobile cloud computing paradigm, called mobile-edge cloud computing, has been proposed \cite{MEC2014,drolia2013case,TelecomCloud2012,barbarossa2013joint}. As illustrated in Figure \ref{fig:An-illustration-of}, mobile-edge cloud computing can provide cloud-computing capabilities at the edge of pervasive radio access networks in close proximity to mobile users.  In this case, the need for fast interactive response can be met by fast  and  low-latency connection (e.g., via fiber transmission) to large-scale resource-rich cloud computing infrastructures (called telecom cloud) deployed by telecom operators (e.g., AT\&T and T-Mobile) within the network edge and backhaul/core networks. By endowing ubiquitous radio access networks (e.g., 3G/4G macro-cell and small-cell base-stations) with powerful computing capabilities, mobile-edge cloud computing is envisioned to provide pervasive and agile computation augmenting services for mobile device users at anytime and anywhere \cite{MEC2014,drolia2013case,TelecomCloud2012,barbarossa2013joint}.


In this paper, we study the issue of designing efficient computation offloading mechanism for mobile-edge cloud computing. One critical factor of affecting the computation offloading performance is the wireless access efficiency \cite{barbera2013offload}. Given the fact that base-stations in most wireless networks are operating in multi-channel setting, a key challenge is how to achieve efficient wireless access coordination among multiple mobile device users for computation offloading. If too many mobile device users choose the same wireless channel to offload the computation to the cloud
simultaneously, they may cause severe interference to each other, which would reduce the data rates for computation offloading. This hence can lead to low energy efficiency and long data transmission time. In this case, it would not be beneficial for the mobile device users to offload computation to the cloud. \rev{To achieve efficient computation offloading for mobile-edge cloud computing, we hence need to carefully tackle two key challenges: 1) how should a mobile user choose between the local computing (on its own device) and the cloud computing (via computation offloading)? 2) if a user chooses the cloud computing, how can the user choose a proper channel in order to achieve high wireless access efficiency for computation offloading?}

We adopt a game theoretic approach to address these challenges. Game theory is a powerful tool for designing distributed
mechanisms, such that the mobile device users in the system can locally
make decisions based on strategic interactions and achieve a mutually satisfactory computation offloading solution. This can help to ease the heavy burden of complex centralized management (e.g., massive information collection from mobile device users) by the telecom cloud operator. Moreover, as different mobile devices are usually owned by different individuals and they may pursue different interests, game theory provides a useful framework to analyze
the interactions among multiple mobile device users who act in their own interests and devise incentive compatible computation offloading mechanisms such that no mobile user has the incentive to deviate unilaterally.

Specifically, we model the computation offloading decision making problem among multiple mobile device users for mobile-edge cloud computing in a multi-channel wireless environment as a multi-user computation offloading game. We then propose a
distributed computation offloading algorithm that can achieve the Nash equilibrium of the game. The main results and contributions of
this paper are as follows:
\begin{itemize}
\item \emph{Multi-User Computation Offloading Game Formulation}: \rev{We first show that it is NP-hard to find the centralized optimal multi-user computation offloading solutions in a multi-channel wireless interference environment. We hence consider the distributed alternative and formulate
the distributed computation offloading decision making problem among the mobile device users as a multi-user computation offloading game, which takes into account both communication and computation aspects of mobile-edge cloud computing. We also extend our study to the scenario of multi-user computation offloading in the multi-channel wireless contention environment.}
\item \emph{Analysis of Computation Offloading Game Properties}: We then study the structural property of the multi-user computation offloading game and show that the game is a potential game by carefully constructing a potential function. According to the property of potential game, we show that the  multi-user computation offloading game admits the finite improvement property and always possesses a Nash equilibrium.
\item \emph{Distributed Computation Offloading Algorithm Design}: We next devise a distributed computation offloading algorithm that achieves a Nash equilibrium of the multi-user computation offloading game and derive the upper bound of the convergence time under mild conditions. We further quantify the efficiency ratio of the Nash equilibrium solution by the algorithm over the centralized optimal solutions in terms of two important metrics of the number of beneficial cloud computing users and the system-wide computation overhead. Numerical results demonstrate that the proposed algorithm can achieve efficient computation offloading performance and scale well as the user size increases.
\end{itemize}

The rest of the paper is organized as follows. We first  present the system model
in Section \ref{sec:System-Model}. We then propose the multi-user
computation offloading game and develop the distributed computation
offloading algorithm in Sections \ref{sec:Decentralized-Computation-Offloa}
and \ref{sec:Decentralized-Computation-Offloa-1}, respectively. We next analyze the performance of the algorithm and present the numerical results in Sections \ref{performance} and \ref{sec:Numerical-Results}, respectively. We further extend our study to the case under the wireless contention model in Section \ref{Extension}, discuss the related
work in Section \ref{sec:Related-Work}, and finally conclude in Section \ref{sec:Conclusion}.

\section{System Model}\label{sec:System-Model}
We first introduce the system model. We consider a set of $\mathcal{N}=\{1,2,...,N\}$ collocated mobile device users, where each user has a computationally intensive task to be completed. There exists a wireless base-station $s$ and through which the mobile device users can offload the computation to the cloud in proximity deployed by the telecom operator. Similar to many previous studies in mobile cloud computing (e.g., \cite{barbera2013offload,rudenko1998saving,huerta2008adaptable,xian2007adaptive,huang2012dynamic,wen2012energy,wu2013making,Chen2014DCO,barbarossa2013joint}) and mobile networking (e.g., \cite{wu2002multi} and \cite{iosifidisiterative2013}), to enable tractable analysis and get useful insights, we consider a quasi-static scenario where the set of mobile device users $\mathcal{N}$ remains unchanged  during a computation offloading period (e.g., several hundred milliseconds), while may change across different periods\footnote{The general case that mobile users may depart and leave dynamically within a computation offloading period will be considered in a future work.}.  Since both the communication and computation aspects play a key role in mobile-edge cloud computing, we next introduce the communication and computation models in details.

\subsection{Communication Model}\label{sub:Communication-Model}

We first introduce the communication model for wireless access in mobile-edge cloud computing. Here the wireless base-station $s$ can be a 3G/4G macro-cell or small-cell base-station \cite{lopez2013distributed} that manages the uplink/downlink communications of mobile device users. There are $M$ wireless channels and the set of channels is denoted as $\mathcal{M}=\{1,2,...,M\}$. Furthermore, we denote $a_{n}\in\{0\}\cup\mathcal{M}$ as the computation offloading decision of mobile device user $n$. Specifically, we have $a_{n}>0$ if user $n$ chooses to offload the computation to the cloud via a wireless channel $a_{n}$; we have $a_{n}=0$ if user $n$ decides to compute its task locally on its own mobile device.
Given the decision profile $\boldsymbol{a}=(a_{1},a_{2},...,a_{N})$
of all the mobile device users, we can compute the uplink data rate of a mobile device user $n$ that chooses to offload the computation to the cloud via a wireless channel $a_{n}>0$ as \cite{rappaport1996wireless}
\begin{align}
r_{n}(\boldsymbol{a})=w\log_{2}\left(1+\frac{q_{n}g_{n,s}}{\varpi_{0}+\sum_{i\in\mathcal{N}\backslash\{n\}:a_{i}=a_{n}}q_{i}g_{i,s}}\right).\label{eq:R1}
\end{align}
Here $w$ is the channel bandwidth and $q_{n}$ is user $n$'s transmission
power which is determined by the wireless base-station according to some power
control algorithms such as \cite{xiao2003utility} and \cite{chiang2008power}\footnote{To be compatible with existing wireless systems, in this paper we consider that the power is determined to satisfy the requirements of wireless transmission (e.g., the specified SINR threshold). For the future work, we will study the joint power control and offloading decision making problem to optimize the performance of computation offloading. This joint problem would be very challenging to solve since the offloading decision making problem alone is NP-hard as we show later.}. Further, $g_{n,s}$ denotes
the channel gain between the mobile device user $n$ and the base-station $s$,
and $\varpi_{0}$ denotes the background noise power. \rev{Note that here we focus on exploring the computation offloading problem under the wireless interference model, which can well capture user's time average aggregate throughput in the cellular communication scenario in which some physical layer channel access
scheme (e.g., CDMA) is adopted to allow multiple users to share the same spectrum resource simultaneously and efficiently. In Section \ref{Extension}, we will also extend our study to the wireless contention model in which some media access control protocol such as CSMA is adopted in WiFi-alike networks.}

From the communication model in (\ref{eq:R1}), we see that if too many mobile device users  choose to offload the computation via the same wireless access channel simultaneously during a computation offloading period, they may incur severe interference, leading to low data rates. As we discuss latter, this would negatively affect the performance of mobile-edge cloud computing.

\subsection{Computation Model}

We then introduce the computation model. We consider that each mobile device
user $n$ has a computation task $\mathcal{J}_{n}\triangleq(b_{n},d_{n})$
that can be computed  either locally on the mobile device or remotely
on the telecom cloud via computation offloading. Here $b_{n}$ denotes the
size of computation input data (e.g., the program codes and input parameters) involved in the computation task $\mathcal{J}_{n}$
and $d_{n}$ denotes the total number of CPU cycles required to accomplish the computation task $\mathcal{J}_{n}$. A mobile device user $n$ can apply the methods (e.g., call graph analysis) in \cite{cuervo2010maui,yang2013framework} to obtain the information of $b_{n}$ and $d_{n}$.   We
next discuss the computation overhead in terms of both energy consumption and processing time for both local and cloud computing approaches.

\subsubsection{Local Computing}

For the local computing approach, a mobile device user $n$ executes its
computation task $\mathcal{J}_{n}$ locally on the mobile device. Let
$f^{m}_{n}$ be the computation capability (i.e., CPU cycles per
second) of mobile device user $n$. Here we allow that different mobile devices may have different computation capabilities. The computation execution time of
the task $\mathcal{J}_{n}$ by local computing is then given as
\begin{equation}
t_{n}^{m}=\frac{d_{n}}{f^{m}_{n}}.\label{eq:l1}
\end{equation}
For the computational energy, we have that
\begin{equation}
e_{n}^{m}=\gamma_{n}d_{n},\label{eq:l2}
\end{equation}
where $\gamma_{n}$ is the coefficient denoting the consumed energy per CPU cycle, which can be obtained by the measurement method in \cite{wen2012energy}.

According to (\ref{eq:l1}) and (\ref{eq:l2}), we can then compute
the overhead of the local computing approach in terms of computational
time and energy as
\begin{equation}
K_{n}^{m}=\lambda_{n}^{t}t_{n}^{m}+\lambda_{n}^{e}e_{n}^{m},\label{eq:l3}
\end{equation}
where $\lambda_{n}^{t},\lambda_{n}^{e}\in \{0,1\}$ denote the weighting parameters of
computational time and energy for mobile device user $n$'s decision making,
respectively. When a user is at a low battery state and cares about the energy consumption, the user can set $\lambda_{n}^{e}=1$ and $\lambda_{n}^{t}=0$ in the decision making. When a user is running some application that is sensitive to the delay (e.g., video streaming) and hence concerns about the processing time, then the user can set $\lambda_{n}^{e}=0$ and $\lambda_{n}^{t}=1$ in the decision making. To provide rich modeling flexibility, our model can also apply to the generalized case where $\lambda_{n}^{t},\lambda_{n}^{e}\in [0,1]$ such that  a user can take both computational time and energy into the decision making at the same time. In practice the proper weights that capture a user's valuations on computational energy and time can be determined by applying the multi-attribute utility approach in the multiple criteria decision making theory \cite{wallenius2008multiple}.

\subsubsection{Cloud Computing}

For the cloud computing approach, a mobile device user $n$ will offload
its computation task $\mathcal{J}_{n}$ to the cloud in proximity deployed by telecom operator via wireless access and the cloud
will execute the computation task on behalf of the mobile device user.

For the computation offloading, a mobile device user $n$ would incur the extra
overhead in terms of time and energy for transmitting the computation input data
to the cloud via wireless access. According to the communication
model in Section \ref{sub:Communication-Model}, we can compute the
transmission time and energy of mobile device user $n$ for offloading the input
data of size $b_{n}$ as, respectively,
\begin{equation}
t_{n,off}^{c}(\boldsymbol{a})=\frac{b_{n}}{r_{n}(\boldsymbol{a})},\label{eq:c2}
\end{equation}
and
\begin{equation}
e_{n}^{c}(\boldsymbol{a})=\frac{q_{n}b_{n}}{r_{n}(\boldsymbol{a})}+L_{n},\label{eq:c3}
\end{equation}
where $L_{n}$ is the tail energy due to that the mobile device will continue to hold the channel for a while even after the data transmission. Such a tail phenomenon is commonly observed in 3G/4G networks \cite{hu2014quality}. After the offloading, the cloud will execute the computation
task $\mathcal{J}_{n}$. We denote $f^{c}_{n}$ as the computation capability
(i.e., CPU cycles per second) assigned to user $n$ by the cloud. Similar to the mobile data usage service, the cloud computing capability $f^{c}_{n}$ is determined according to the cloud computing service contract subscribed by the mobile user $n$ from the telecom operator. Due to the fact many telecom operators (e.g., AT\&T and T-Mobile) are capable for large-scale cloud computing infrastructure investment, we consider that the cloud computing resource requirements of all users can be satisfied. The case that a small/medium telecom operator has limited cloud computing resource provision will be considered in a future work.   Then the execution time
of the task $\mathcal{J}_{n}$ of mobile device user $n$ on the cloud can
be then given as
\begin{equation}
t_{n,exe}^{c}=\frac{d_{n}}{f^{c}_{n}}.\label{eq:c1}
\end{equation}

According to (\ref{eq:c2}), (\ref{eq:c3}), and (\ref{eq:c1}), we
can compute the overhead of the cloud computing approach in terms
of processing time and energy as
\begin{equation}
K_{n}^{c}(\boldsymbol{a})=\lambda_{n}^{t}\left(t_{n,off}^{c}(\boldsymbol{a})+t_{n,exe}^{c}\right)+\lambda_{n}^{e}e_{n}^{c}(\boldsymbol{a}).\label{eq:c4}
\end{equation}

Similar to many studies such as \cite{huang2012dynamic,rudenko1998saving,huerta2008adaptable,xian2007adaptive}, we neglect
the time overhead for the cloud to send the computation outcome back
to the mobile device user, due to the fact that for many applications (e.g., face recognition), the size of the computation outcome in general is much smaller than the size of computation input data, which includes the mobile system settings, program codes and input parameters. \rev{Also, due to the fact that wireless spectrum is the most constrained resource, and higher-layer network resources are much richer and the higher-layer management can be done quickly and efficiently via high-speed wired connection and high-performance computing using powerful servers at the base-station, the wireless access efficiency at the physical layer is the bottleneck for computation offloading via wireless transmission \cite{barbera2013offload}. Similar to existing studies for mobile cloud computing \cite{Chen2014DCO,yang2013framework,barbarossa2013joint}, we hence account for the most critical factor (i.e., wireless access at the physical layer) only\footnote{We can account for the high-layer factors by simply adding a processing latency term (which is typically much smaller than the wireless access) into user’s time overhead function and this will not affect the analysis of the problem.}.}

Based on the system model above,  in the following sections we will develop a game theoretic approach for devising efficient multi-user computation offloading policy for the mobile-edge cloud computing.

\section{Multi-User Computation Offloading Game}\label{sec:Decentralized-Computation-Offloa}

In this section, we consider the issue of achieving efficient multi-user computation offloading for the mobile-edge cloud computing.

According to the communication and computation models in Section \ref{sec:System-Model}, we see
that the computation offloading decisions $\boldsymbol{a}$ among the mobile device users
are coupled. If too many mobile device users simultaneously choose to offload the computation
tasks to the cloud via the same wireless channel, they may incur severe
interference and  this would lead to a low data rate. When the data rate of a mobile device user $n$  is low, it would consume high energy in the wireless access for offloading the computation input data to
cloud and incur long transmission time as well. In this case, it would be more beneficial for the user to compute the task locally on the mobile device to avoid the long processing time and high energy consumption by the cloud computing approach. Based on this insight, we first define the concept of beneficial cloud computing.
\begin{definition}\label{defn:beneficial}
Given a computation offloading decision
profile $\boldsymbol{a}$, the decision
$a_{n}$ of user $n$ that chooses the cloud computing approach (i.e.,
$a_{n}>0$) is \textbf{beneficial} if the cloud computing approach does not
incur higher overhead than the local computing approach (i.e., $K_{n}^{c}(\boldsymbol{a})\leq K_{n}^{m}$).
\end{definition}

The concept of beneficial cloud computing plays an important role in the mobile-edge cloud computing. On the one hand, from the user's perspective, beneficial cloud computing ensures the individual rationality, i.e., a mobile device user would not suffer performance loss by adopting the cloud computing approach. On the other hand, from the telecom operator's point of view, the larger number of users achieving beneficial cloud computing implies a higher utilization ratio of the cloud resources and a higher revenue of providing mobile-edge cloud computing service. \rev{Thus, different from traditional multi-user traffic scheduling problem, when determining the wireless access schedule for computation offloading, we need to ensure that for a user choosing cloud computing, that user must be a beneficial cloud computing user. Otherwise, the user will not follow the computation offloading schedule, since it can switch to the local computing approach to reduce the computation overhead.}

\subsection{Finding Centralized Optimum is NP-Hard}

\rev{We first consider the centralized optimization problem in term of the performance metric of the total number of beneficial cloud computing users. We will further consider another important metric of the system-wide computation overhead later.} Mathematically, we can model
the problem as follows:
\begin{eqnarray}
\max_{\boldsymbol{a}} & \sum_{n\in\mathcal{N}}I_{\{a_{n}>0\}}\label{eq:M1}\\
\mbox{subject to} & K_{n}^{c}(\boldsymbol{a})\leq K_{n}^{m}, & \forall a_{n}>0,n\in\mathcal{N},\nonumber \\
 & a_{n}\in\{0,1,...,M\}, & \forall n\in\mathcal{N}.\nonumber
\end{eqnarray}
Here $I_{\{A\}}$ is an indicator function with $I_{\{A\}}=1$ if
the event $A$ is true and $I_{\{A\}}=0$ otherwise.

Unfortunately, it turns out that the problem of finding the maximum number of  beneficial cloud computing users can be extremely challenging.
\begin{theorem}
\label{thm:The-problem-in}The problem in (\ref{eq:M1}) that computes
the maximum number of beneficial cloud computing users is NP-hard.
\end{theorem}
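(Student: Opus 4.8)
The plan is to prove NP-hardness by a polynomial-time reduction from a number-partition / bin-packing feasibility problem, after first rewriting the beneficial constraint in a transparent form. The first step I would take is to simplify $K_n^c(\boldsymbol{a}) \le K_n^m$. Writing $P_i \triangleq q_i g_{i,s}$ for the received power of user $i$ at the base-station and collecting the $\boldsymbol{a}$-dependent terms in (\ref{eq:c4}), one sees that $K_n^c(\boldsymbol{a}) = A_n / r_n(\boldsymbol{a}) + B_n$ with $A_n = b_n(\lambda_n^t + \lambda_n^e q_n) > 0$ and a constant $B_n \ge 0$, so $K_n^c$ is strictly decreasing in the rate $r_n$. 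Hence the beneficial condition is equivalent to $r_n(\boldsymbol{a}) \ge R_n$ for an instance-determined threshold $R_n$, which by (\ref{eq:R1}) is in turn equivalent to the linear \emph{interference-budget} constraint
\begin{equation}
\sum_{i \in \mathcal{N} \setminus \{n\}: a_i = a_n} P_i \le \frac{P_n}{2^{R_n/w}-1} - \varpi_0 \triangleq I_n^{\max}. \nonumber
\end{equation}
Thus the whole problem reduces to: assign users to channels so as to maximize the number of channel-assigned users subject to each user's co-channel interference power staying within its own budget $I_n^{\max}$.

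Next I would exhibit the reduction. Take an instance of the (strongly NP-complete) bin-packing feasibility problem: items of sizes $s_1,\dots,s_N$, $M$ bins of common capacity $C \ge \max_i s_i$; ask whether all items fit. I build a computation-offloading instance with $N$ users and $M$ channels, choosing the physical parameters so that $P_n = s_n$ and $I_n^{\max} = C - s_n$ for every $n$. Any positive $\varpi_0$ and the matching threshold $\theta_n \triangleq 2^{R_n/w}-1 = P_n/(C - s_n + \varpi_0)$ realize this, and the underlying $b_n, d_n, \lambda_n^t, \lambda_n^e, q_n, g_{n,s}, f_n^c, \gamma_n, L_n$ can be selected in polynomial time with valid (positive) values because $C \ge \max_i s_i$ guarantees $I_n^{\max}\ge 0$. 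With this choice the interference-budget constraint for a user $n$ placed on channel $c$ (carrying the user set $S_c$) becomes $\sum_{i \in S_c \setminus \{n\}} s_i \le C - s_n$, i.e. $\sum_{i \in S_c} s_i \le C$, which is exactly the bin capacity constraint and is \emph{identical for every} $n \in S_c$.

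I would then close the argument. The objective $\sum_n I_{\{a_n>0\}}$ in (\ref{eq:M1}) attains its maximum possible value $N$ if and only if every user offloads and is beneficial, which by the previous step is precisely a partition of all $N$ items into $M$ bins each of load at most $C$. Hence the optimum of (\ref{eq:M1}) equals $N$ if and only if the bin-packing instance is feasible, so any efficient algorithm for (\ref{eq:M1}) would decide bin packing in polynomial time; NP-hardness follows. Specializing to $M=2$ and $C=\tfrac12\sum_i s_i$ yields the same conclusion directly from \textsc{Partition}.

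I expect the main obstacle to be the reduction-design step rather than any counting argument: one must pick the per-user thresholds so that the beneficial condition collapses \emph{uniformly} to a single channel-load inequality $\sum_{i\in S_c}s_i \le C$ --- the $-s_n$ offset appearing on both sides is exactly what makes all co-channel users share one common constraint --- and one must check that every constructed parameter is physically admissible and of polynomial size. The monotonicity rewrite of $K_n^c$ in $r_n$ is routine but load-bearing, since it is what converts the nonlinear rate expression (\ref{eq:R1}) into the additive budget form that aligns with bin-packing capacities.
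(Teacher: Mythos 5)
Your proof is correct and follows essentially the same route as the paper's: you derive the same interference-threshold reformulation of the beneficial condition (the paper's Lemma \ref{lem:Given-the-strategies}) and then build the same reduction in which items become users, bins become channels, item sizes become received powers $q_n g_{n,s}$, and the common capacity $C$ equals each user's interference budget plus its own received power, so that all co-channel constraints collapse to one bin-capacity inequality. The only minor difference is the source problem --- the paper reduces from maximum cardinality bin packing (preserving the optimal value), while you reduce from bin-packing feasibility (optimum of (\ref{eq:M1}) equals $N$ iff all items fit) with a \textsc{Partition} specialization for $M=2$ --- and both variants are valid.
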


\begin{proof}
To proceed, we first introduce the maximum cardinality bin packing
problem \cite{loh2009solving}: we are given $N$ items with sizes $p_{i}$ for $i\in\mathcal{N}$
and $M$ bins of identical capacity $C$, and the objective is to
assign a maximum number of items to the fixed number of bins without
violating the capacity constraint. Mathematically, we can formulate
the problem as
\begin{eqnarray}
\max & \sum_{i=1}^{N}\sum_{j=1}^{M}x_{ij} & \label{eq:M1-1}\\
\mbox{subject to} & \sum_{i=1}^{N}p_{i}x_{ij}\leq C, & \forall j\in\mathcal{M},\nonumber \\
 & \sum_{j=1}^{M}x_{ij}\leq1, & \forall i\in\mathcal{N},\nonumber \\
 & x_{ij}\in\{0,1\}, & \forall i\in\mathcal{N},j\in\mathcal{M}.\nonumber
\end{eqnarray}
It is known from \cite{loh2009solving} that the maximum cardinality bin packing problem
above is NP-hard.

For our problem, according to Theorem \ref{lem:Given-the-strategies}, we know that a user $n$ that
can achieve beneficial cloud computing if and only if its received
interference $\sum_{i\in\mathcal{N}\backslash\{n\}:a_{i}=a_{n}}q_{i}g_{i,s}\leq T_{n}$.
Based on this, we can transform the maximum cardinality bin packing
problem to a special case of our problem of finding the maximum number
of beneficial cloud computing users as follows. We can regard the
items and the bins in the maximum cardinality bin packing problem
as the mobile device users and channels in our problem, respectively.
Then the size of an item $n$ and the capacity constraint of each
bin can be given as $p_{n}=q_{n}g_{n,s}$ and $C=T_{n}+q_{n}g_{n,s}$,
respectively. By this, we can ensure that as long as a user $n$ on
its assigned channel $a_{n}$ achieves the beneficial cloud computing,
for an item $n$, the total sizes of the items on its assigned bin
$a_{n}$ will not violate the capacity constraint $C$. This is due
to the fact that $\sum_{i\in\mathcal{N}\backslash\{n\}:a_{i}=a_{n}}q_{i}g_{i,s}\leq T_{n}$,
which implies that $\sum_{i=1}^{N}p_{i}x_{i,a_{n}}=\sum_{i\in\mathcal{N}\backslash\{n\}:a_{i}=a_{n}}q_{i}g_{i,s}+q_{n}g_{n,s}\leq C.$

Therefore, if we have an algorithm that can find the maximum number
of beneficial cloud computing users, then we can also obtain the optimal
solution to the maximum cardinality bin packing problem. Since
the maximum cardinality bin packing problem is NP-hard, our problem
is hence also NP-hard.
\end{proof}

The key idea of proof is to show that the maximum cardinality bin packing problem (which is known to be NP-hard \cite{loh2009solving}) can be reduced to a special case of our problem. Theorem \ref{thm:The-problem-in} provides the major
motivation for our game theoretic study, because it suggests
that the centralized optimization problem is fundamentally
difficult. By leveraging the intelligence of each individual mobile device user, game theory is a powerful tool for devising distributed mechanisms with low complexity, such that the users can self-organize into a mutually satisfactory solution. This can also help to ease the heavy burden of complex centralized computing and management by the cloud operator. Moreover, another key rationale of adopting the game theoretic approach is that the mobile devices are owned by different individuals and they may pursue different interests. Game theory is a useful framework to analyze
the interactions among multiple mobile device users who act in their own interests and devise incentive compatible computation offloading mechanisms such that no user has the incentive to deviate unilaterally.

\rev{
Besides the performance metric of the number of beneficial cloud computing  users, in this paper we also consider another important metric of the system-wide computation overhead, i.e., \begin{eqnarray}
\min_{\boldsymbol{a}} & \sum_{n\in\mathcal{N}}Z_{n}(\boldsymbol{a})\label{eq:M1111}\\
\mbox{subject to}  & a_{n}\in\{0,1,...,M\}, & \forall n\in\mathcal{N}.\nonumber
\end{eqnarray}
Note that the centralized optimization problem for minimizing the system-wide computation overhead is also NP-hard, since it involves a combinatorial optimization over the multi-dimensional discrete space (i.e., $\{0,1,...,M\}^{N}$). As shown in Sections \ref{performance} and \ref{sec:Numerical-Results}, the proposed game theoretic solution can also achieve superior performance in terms of the performance metric of the system-wide computation overhead.
}

\subsection{Game Formulation}

We then consider the distributed computation offloading decision making problem among
the mobile device users. Let $a_{-n}=(a_{1},...,a_{n-1},a_{n+1},...,a_{N})$
be the computation offloading decisions by all other users except
user $n$. Given other users' decisions $a_{-n}$, user $n$
would like to select a proper decision $a_{n}$, by using either the local
computing ($a_{n}=0$) or the cloud computing via a wireless channel ($a_{n}>0$) to minimize its computation overhead, i.e.,
\[
\min_{a_{n}\in\mathcal{A}_{n}\triangleq\{0,1,...,M\}}Z_{n}(a_{n},a_{-n}),\forall n\in\mathcal{N}.
\]
 According to (\ref{eq:l3}) and (\ref{eq:c4}), we can obtain the
overhead function of mobile device user $n$ as
\begin{equation}
Z_{n}(a_{n},a_{-n})=\begin{cases}
K_{n}^{m}, & \mbox{if }a_{n}=0,\\
K_{n}^{c}(\boldsymbol{a}), & \mbox{if }a_{n}>0.
\end{cases}\label{eq:V1}
\end{equation}

We then formulate the problem above as a strategic game $\Gamma=(\mathcal{N},\{\mathcal{A}_{n}\}_{n\in\mathcal{N}},\{Z_{n}\}_{n\in\mathcal{N}})$,
where the set of mobile device users $\mathcal{N}$ is the set of players,
$\mathcal{A}_{n}$ is the set of strategies for player
$n$, and the overhead function $Z_{n}(a_{n},a_{-n})$ of each user
$n$ is the cost function to be minimized by player $n$. In the sequel,
we call the game $\Gamma$ as the multi-user computation offloading game.
We now introduce the important concept of Nash equilibrium.
\begin{definition}
A strategy profile $\boldsymbol{a}^{*}=(a_{1}^{*},...,a_{N}^{*})$
is a \textbf{Nash equilibrium} of the multi-user computation offloading
game if at the equilibrium $\boldsymbol{a}^{*}$, no user can further reduce its overhead by unilaterally
changing its strategy, i.e.,
\begin{equation}
Z_{n}(a_{n}^{*},a_{-n}^{*})\leq Z_{n}(a_{n},a_{-n}^{*}),\forall a_{n}\in\mathcal{A}_{n},n\in\mathcal{N}.\label{eq:ne1}
\end{equation}
\end{definition}

According to the concept of Nash equilibrium, we first have the following observation.
\begin{corollary}
For the multi-user computation offloading game, if a user $n$ at Nash equilibrium $\boldsymbol{a}^{*}$ chooses cloud computing approach (i.e., $a_{n}^{*}>0$), then the user $n$ must be a beneficial cloud computing user.
\end{corollary}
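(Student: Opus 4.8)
The plan is to exploit the fact that the local computing option $a_{n}=0$ is always a feasible strategy in $\mathcal{A}_{n}=\{0,1,\dots,M\}$, so every user retains local computing as a fallback at any equilibrium. Since $\boldsymbol{a}^{*}$ is a Nash equilibrium, the defining inequality (\ref{eq:ne1}) holds for user $n$ against \emph{every} alternative $a_{n}\in\mathcal{A}_{n}$; I would simply instantiate it at the particular deviation $a_{n}=0$.

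First I would write down the equilibrium condition specialized to this deviation, namely $Z_{n}(a_{n}^{*},a_{-n}^{*})\leq Z_{n}(0,a_{-n}^{*})$. Next I would evaluate both sides using the piecewise overhead function in (\ref{eq:V1}). By the hypothesis $a_{n}^{*}>0$, the left-hand side equals $K_{n}^{c}(\boldsymbol{a}^{*})$; because the deviation sets user $n$'s offloading decision to $0$ (local computing), the right-hand side equals $K_{n}^{m}$. Combining these yields $K_{n}^{c}(\boldsymbol{a}^{*})\leq K_{n}^{m}$, which is precisely the condition in Definition \ref{defn:beneficial} for user $n$'s cloud computing decision to be beneficial.

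The result then follows immediately, so there is essentially no hard step here: the only thing to notice is that individual rationality (being a beneficial cloud computing user) is automatically enforced by the equilibrium, since a non-beneficial user could strictly lower its cost by switching to local computing, contradicting the fact that $\boldsymbol{a}^{*}$ is a Nash equilibrium. The ``obstacle,'' such as it is, is conceptual rather than technical: recognizing that the always-available local-computing fallback couples the equilibrium condition to the beneficial-cloud-computing inequality via the definition of $Z_{n}$.
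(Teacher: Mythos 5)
Your proposal is correct and is essentially the paper's own argument: the paper proves the corollary by noting that a non-beneficial cloud user could strictly improve by switching to local computing, contradicting the Nash equilibrium, which is exactly your instantiation of the equilibrium inequality (\ref{eq:ne1}) at the deviation $a_{n}=0$ combined with the definition of $Z_{n}$ in (\ref{eq:V1}), merely phrased contrapositively rather than directly.
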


This is because if a user choosing the cloud computing approach is not a beneficial cloud computing user at the equilibrium, then the user can improve  its benefit by just switching to the local computing approach, which contradicts with the fact that no user can improve unilaterally at the Nash equilibrium. Furthermore, the Nash equilibrium also ensures the nice self-stability property such that
the users at the equilibrium can achieve a mutually satisfactory solution and no user has the incentive to deviate. This property is very important to the multi-user computation offloading problem, since the mobile devices are owned by different individuals and they may act in their own interests.

\subsection{Structural Properties}

We next study the existence of Nash equilibrium of the multi-user computation offloading game. To proceed, we shall resort to
a powerful tool of potential game \cite{monderer1996potential}.
\begin{definition}
A game is called a potential game if it admits potential function
$\Phi(\boldsymbol{a})$ such that for every $n\in\mathcal{N}$, $a_{-n}\in\prod_{i\neq n}\mathcal{A}_{i}$,
and $a_{n}^{'},a_{n}\in\mathcal{A}_{n}$, if
\begin{equation}
Z_{n}(a_{n}^{'},a_{-n})<Z_{n}(a_{n},a_{-n}),\label{eq:p1}
\end{equation}
we have
\begin{equation}
\Phi(a_{n}^{'},a_{-n})<\Phi(a_{n},a_{-n}).\label{eq:p2}
\end{equation}
\end{definition}

An appealing property of the potential game is that it always admits a Nash equilibrium  and possesses the
finite improvement property, such that any asynchronous better response
update process (i.e., no more than one player updates the strategy to reduce the overhead
at any given time) must be finite and leads to a Nash equilibrium \cite{monderer1996potential}.

To show the multi-user computation offloading game is a potential game, we first show the following result.
\begin{lemma}
\label{lem:Given-the-strategies}Given a computation offloading decision
profile $\boldsymbol{a}$, a user $n$ achieves beneficial
cloud computing if its received interference $\mu_{n}(\boldsymbol{a})\triangleq\sum_{i\in\mathcal{N}\backslash\{n\}:a_{i}=a_{n}}q_{i}g_{i,s}$
on the chosen wireless channel $a_{n}>0$ satisfies that $\mu_{n}(\boldsymbol{a})\leq T_{n}$,
with the threshold
\[
T_{n}=\frac{q_{n}g_{n,s}}{2^{\frac{\left(\lambda_{n}^{t}+\lambda_{n}^{e}q_{n}\right)b_{n}}{w\left(\lambda_{n}^{t}e_{n}^{m}+\lambda_{n}^{e}e_{n}^{m}-\lambda_{n}^{e}L_{n}-\lambda_{n}^{t}t_{n,exe}^{c}\right)}}-1}-\varpi_{0}.
\]
\end{lemma}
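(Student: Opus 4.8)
The plan is to unfold Definition~\ref{defn:beneficial} directly: user $n$'s cloud decision is beneficial precisely when $K_n^c(\boldsymbol{a}) \le K_n^m$. First I would substitute the explicit overhead expressions from~(\ref{eq:l3}) and~(\ref{eq:c4}), using $t_{n,off}^c(\boldsymbol{a}) = b_n/r_n(\boldsymbol{a})$ and $e_n^c(\boldsymbol{a}) = q_n b_n / r_n(\boldsymbol{a}) + L_n$. Collecting the two terms that carry the factor $1/r_n(\boldsymbol{a})$ on one side and moving the interference-independent quantities ($t_{n,exe}^c$, $L_n$, and the whole of $K_n^m$) to the other, the condition rearranges into
\begin{equation*}
\frac{(\lambda_n^t + \lambda_n^e q_n) b_n}{r_n(\boldsymbol{a})} \le K_n^m - \lambda_n^t t_{n,exe}^c - \lambda_n^e L_n.
\end{equation*}
Since the left-hand side is nonnegative, this is equivalent to a lower bound on the achievable rate $r_n(\boldsymbol{a})$.

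Next I would substitute the rate formula~(\ref{eq:R1}), namely $r_n(\boldsymbol{a}) = w\log_2(1 + q_n g_{n,s}/(\varpi_0 + \mu_n(\boldsymbol{a})))$. Because $r_n(\boldsymbol{a})$ is strictly decreasing in the received interference $\mu_n(\boldsymbol{a})$, the lower bound on $r_n$ translates monotonically into an upper bound on $\mu_n$. Concretely, I would exponentiate to clear the logarithm, obtaining $1 + q_n g_{n,s}/(\varpi_0 + \mu_n) \ge 2^{(\lambda_n^t + \lambda_n^e q_n) b_n / [w(K_n^m - \lambda_n^t t_{n,exe}^c - \lambda_n^e L_n)]}$, then isolate $\varpi_0 + \mu_n(\boldsymbol{a})$ and subtract $\varpi_0$. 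This yields exactly the stated threshold $T_n$, with the local-computing overhead $K_n^m = \lambda_n^t t_n^m + \lambda_n^e e_n^m$ appearing in the exponent denominator.

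The step I expect to require the most care is controlling the signs so that each manipulation preserves the inequality direction. Clearing $r_n(\boldsymbol{a})$ from the denominator and exponentiating are only direction-preserving when the quantity $K_n^m - \lambda_n^t t_{n,exe}^c - \lambda_n^e L_n$ is strictly positive; this is precisely the (physically natural) requirement that local computing be at least as costly as the cloud's pure execution-plus-tail overhead, without which no finite interference level could make offloading beneficial and $T_n$ would be vacuous. I would therefore record this positivity as a standing assumption (noting that otherwise $T_n \le -\varpi_0$ and the condition is never satisfiable). Given that, each remaining step — combining the $1/r_n$ terms, inverting the decreasing map $\mu_n \mapsto r_n$, and subtracting $\varpi_0$ — is a routine equivalence; in fact the whole chain is reversible, so the same computation also delivers the converse direction that is invoked later in the proof of Theorem~\ref{thm:The-problem-in}.
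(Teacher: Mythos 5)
Your proposal is correct and takes essentially the same route as the paper's own proof: unfold Definition~\ref{defn:beneficial} into $K_{n}^{c}(\boldsymbol{a})\leq K_{n}^{m}$, rearrange into a lower bound on $r_{n}(\boldsymbol{a})$, and invert the logarithmic rate formula in (\ref{eq:R1}) to obtain the threshold $T_{n}$ on $\mu_{n}(\boldsymbol{a})$. Your explicit attention to the sign condition $K_{n}^{m}-\lambda_{n}^{t}t_{n,exe}^{c}-\lambda_{n}^{e}L_{n}>0$ (without which $T_{n}\leq-\varpi_{0}$ and the condition is vacuous) and to the reversibility of the chain are points the paper leaves implicit, the latter being exactly what is invoked as an ``if and only if'' in the proof of Theorem~\ref{thm:The-problem-in}.
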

\begin{proof}
According to (\ref{eq:l3}), (\ref{eq:c4}), and
Definition \ref{defn:beneficial}, we know that the condition $K_{n}^{c}(\boldsymbol{a})\leq K_{n}^{m}$ is equivalent to
\[
\frac{\left(\lambda_{n}^{t}+\lambda_{n}^{e}q_{n}\right)b_{n}}{r_{n}(\boldsymbol{a})}+\lambda_{n}^{e}L_{n}+\lambda_{n}^{t}t_{n,exe}^{c}\leq\lambda_{n}^{t}t_{n}^{m}+\lambda_{n}^{e}e_{n}^{m}.
\]
That is,
\[
r_{n}(\boldsymbol{a})\geq\frac{\left(\lambda_{n}^{t}+\lambda_{n}^{e}q_{n}\right)b_{n}}{\lambda_{n}^{t}t_{n}^{m}+\lambda_{n}^{e}e_{n}^{m}-\lambda_{n}^{e}L_{n}-\lambda_{n}^{t}t_{n,exe}^{c}}.
\]
According to (\ref{eq:R1}), we then have that
\[
\sum_{i\in\mathcal{N}\backslash\{n\}:a_{i}=a_{n}}q_{i}g_{i,s}\leq\frac{q_{n}g_{n,s}}{2^{\frac{\left(\lambda_{n}^{t}+\lambda_{n}^{e}q_{n}\right)b_{n}}{w\left(\lambda_{n}^{t}e_{n}^{m}+\lambda_{n}^{e}e_{n}^{m}-\lambda_{n}^{e}L_{n}-\lambda_{n}^{t}t_{n,exe}^{c}\right)}}-1}-\varpi_{0}.
\]
\end{proof}

According to Lemma \ref{lem:Given-the-strategies}, we see that when
the received interference $\mu_{n}(\boldsymbol{a})$ of user $n$ on a wireless channel is lower enough, it is beneficial for the user  to adopt cloud computing approach and offload the computation
to the cloud. Otherwise, the user $n$ should compute the task on
the mobile device locally. Based on Lemma \ref{lem:Given-the-strategies}, we show that the multi-user computation offloading game is indeed a potential
game by constructing the potential function as
\begin{align}
\Phi(\boldsymbol{a})  = & \frac{1}{2}\sum_{i=1}^{N}\sum_{j\ne i}q_{i}g_{i,s}q_{j}q_{j,s}I_{\{a_{i}=a_{j}\}}I_{\{a_{i}>0\}}\nonumber \\
   & +\sum_{i=1}^{N}q_{i}g_{i,s}T_{i}I_{\{a_{i}=0\}}.\label{eq:p4}
\end{align}

\begin{theorem}
\label{thm:The-general-decentralized}The multi-user computation
offloading game is a potential game with the potential function
as given in (\ref{eq:p4}), and hence always has a Nash equilibrium
and the finite improvement property.\end{theorem}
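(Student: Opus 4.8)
The plan is to verify directly that the function $\Phi$ in (\ref{eq:p4}) satisfies the defining inequality of a potential game, and then to invoke the cited result of Monderer and Shapley \cite{monderer1996potential} to conclude existence of a Nash equilibrium and the finite improvement property. The crucial preliminary step is to extract the part of $\Phi$ that depends on a single player $n$. Because the weight $I_{\{a_i=a_j\}}I_{\{a_i>0\}}$ in the first sum is symmetric in $i$ and $j$ (whenever $a_i=a_j$ the conditions $a_i>0$ and $a_j>0$ coincide), collecting all terms that contain the index $n$ cancels the leading factor $\frac{1}{2}$ and yields $q_n g_{n,s}\,\mu_n(\boldsymbol{a})$ when $a_n>0$, while every remaining pair term involves only indices different from $n$ and is therefore independent of $a_n$. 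Combining this with the second sum, I would show that $\Phi(a_n,a_{-n}) = C(a_{-n}) + \phi_n(a_n,a_{-n})$, where $C(a_{-n})$ does not depend on $a_n$ and
\[
\phi_n(a_n,a_{-n}) = \begin{cases} q_n g_{n,s} T_n, & a_n=0,\\ q_n g_{n,s}\,\mu_n(\boldsymbol{a}), & a_n>0.\end{cases}
\]
Consequently, for any unilateral deviation of player $n$ the change in $\Phi$ equals the change in $\phi_n$, so it suffices to compare $\phi_n$ against the cost $Z_n$.

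The second ingredient is monotonicity. Since $r_n(\boldsymbol{a})$ in (\ref{eq:R1}) is strictly decreasing in the interference $\mu_n(\boldsymbol{a})$, and $K_n^c(\boldsymbol{a})$ in (\ref{eq:c4}) is an affine function of $1/r_n(\boldsymbol{a})$ with strictly positive slope $(\lambda_n^t+\lambda_n^e q_n)b_n$, the cloud-computing cost $K_n^c$ is strictly increasing in $\mu_n$. Together with Lemma \ref{lem:Given-the-strategies}, which pins down $\mu_n=T_n$ as the exact break-even point $K_n^c=K_n^m$, this gives the clean equivalences $K_n^c<K_n^m \Leftrightarrow \mu_n<T_n$ and $K_n^c>K_n^m \Leftrightarrow \mu_n>T_n$, together with $K_n^c(\mu_n')<K_n^c(\mu_n)\Leftrightarrow \mu_n'<\mu_n$ for two cloud choices on different channels.

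With these two facts in hand, I would carry out a four-way case analysis on a deviation of player $n$ from $a_n$ to $a_n'$ with $a_{-n}$ fixed: (i) local to local, where $Z_n$ is unchanged and there is nothing to prove; (ii) cloud to cloud, where $Z_n(a_n',a_{-n})<Z_n(a_n,a_{-n})\Leftrightarrow\mu_n(a_n',a_{-n})<\mu_n(a_n,a_{-n})\Leftrightarrow\phi_n(a_n',a_{-n})<\phi_n(a_n,a_{-n})$ by the monotonicity of $K_n^c$ and the fact that $q_n g_{n,s}>0$; (iii) local to cloud, where $Z_n(a_n',a_{-n})<Z_n(a_n,a_{-n})\Leftrightarrow K_n^c<K_n^m\Leftrightarrow\mu_n(a_n',a_{-n})<T_n\Leftrightarrow\phi_n(a_n',a_{-n})<\phi_n(a_n,a_{-n})$; and (iv) cloud to local, which is the mirror image of (iii) via $\mu_n(a_n,a_{-n})>T_n$. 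In every case a strict decrease in $Z_n$ coincides with a strict decrease in $\phi_n$, hence in $\Phi$, which is exactly the implication (\ref{eq:p1})$\Rightarrow$(\ref{eq:p2}) required by the definition.

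I expect the main obstacle to be the bookkeeping in the first step, namely isolating the $n$-dependent part of the symmetric double sum correctly and confirming that the residual $C(a_{-n})$ is genuinely independent of $a_n$, since any slip there would destroy the exact alignment between the increment of $\Phi$ and that of $\phi_n$ on which the entire argument rests. A secondary subtlety is ensuring the strictness of the monotonicity of $K_n^c$ (equivalently, positivity of the slope $(\lambda_n^t+\lambda_n^e q_n)b_n$ and of the denominator defining $T_n$ in Lemma \ref{lem:Given-the-strategies}), so that the break-even comparisons in cases (iii) and (iv) are two-sided equivalences rather than one-sided implications. Once the potential-function property is established, the existence of a Nash equilibrium and the finite improvement property follow immediately from the standard theory of potential games \cite{monderer1996potential}.
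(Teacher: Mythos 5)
Your proposal is correct and follows essentially the same route as the paper's proof: the paper also reduces a unilateral deviation to the three nontrivial cases (cloud-to-cloud, local-to-cloud, cloud-to-local), uses the monotonicity of $r_n$ in the interference together with the threshold of Lemma \ref{lem:Given-the-strategies}, and computes the resulting change in $\Phi$ in each case --- your explicit decomposition $\Phi = C(a_{-n}) + \phi_n$ is exactly what cancels when the paper takes the difference $\Phi(a_{k},a_{-k})-\Phi(a_{k}^{'},a_{-k})$. Your added attention to strictness at the break-even point $\mu_n = T_n$ is a sound refinement of the same argument, not a different one.
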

\begin{proof}
Suppose that a user $k\in\mathcal{N}$ updates its current decision
$a_{k}$ to the decision $a_{k}^{'}$ and this leads to a decrease
in its overhead function, i.e., $Z_{k}(a_{k},a_{-k})>Z_{k}(a_{k}^{'},a_{-k}).$
According to the definition of potential game, we will show that this
also leads to a decrease in the potential function, i.e., $\Phi(a_{k},a_{-k})>\Phi(a_{k}^{'},a_{-k})$.
We will consider the following three cases: 1) $a_{k}>0$ and $a_{k}^{'}>0$;
2) $a_{k}=0$ and $a_{k}^{'}>0$; 3) $a_{k}>0$ and $a_{k}^{'}=0$.

For case 1), since the function of $w\log_{2}(x)$ is monotonously
increasing in terms of $x$, according to (\ref{eq:R1}), we know
that the condition $Z_{k}(a_{k},a_{-k})>Z_{k}(a_{k}^{'},a_{-k})$
implies that
\begin{equation}
\sum_{i\in\mathcal{N}\backslash\{k\}:a_{i}=a_{k}}q_{i}g_{i,s}>\sum_{i\in\mathcal{N}\backslash\{k\}:a_{i}=a_{k}^{'}}q_{i}g_{i,s}.\label{eq:C11}
\end{equation}
Since $a_{k}>0$ and $a_{k}^{'}>0$, according to (\ref{eq:p4}) and
(\ref{eq:C11}), we then know that\begin{align}
   & \Phi(a_{k},a_{-k})-\Phi(a_{k}^{'},a_{-k})\nonumber\\
 = & \frac{1}{2}q_{k}g_{k,s}\sum_{i\ne k}q_{i}g_{i,s}I_{\{a_{i}=a_{k}\}} +\frac{1}{2}\sum_{k\ne i}q_{i}g_{i,s}I_{\{a_{k}=a_{i}\}}q_{k}g_{k,s}\nonumber\\
   & -\frac{1}{2}q_{k}g_{k,s}\sum_{i\ne k}q_{i}g_{i,s}I_{\{a_{i}=a_{k}^{'}\}} -\frac{1}{2}\sum_{k\ne i}q_{i}g_{i,s}I_{\{a_{k}^{'}=a_{i}\}}q_{k}g_{k,s}\nonumber\\
  = & q_{k}g_{k,s}\sum_{i\ne k}q_{i}g_{i,s}I_{\{a_{i}=a_{k}\}} -q_{k}g_{k,s}\sum_{i\ne k}q_{i}g_{i,s}I_{\{a_{i}=a_{k}^{'}\}} > 0. \label{eq:lslsls1}
\end{align}


For case 2), since $a_{k}=0$, $a_{k}^{'}>0$, and $Z_{k}(a_{k},a_{-k})>Z_{k}(a_{k}^{'},a_{-k})$,
we know that $\sum_{i\in\mathcal{N}\backslash\{k\}:a_{i}=a_{k}^{'}}q_{i}g_{i,s}<T_{k}$. This implies that\begin{align}
   & \Phi(a_{k},a_{-k})-\Phi(a_{k}^{'},a_{-k})\nonumber\\
  = & q_{k}g_{k,s}T_{k}\nonumber\\
   & -\frac{1}{2}q_{k}g_{k,s}\sum_{i\ne k}q_{i}g_{i,s}I_{\{a_{i}=a_{k}^{'}\}}-\frac{1}{2}\sum_{k\ne i}q_{i}g_{i,s}I_{\{a_{k}^{'}=a_{i}\}}q_{k}g_{k,s}\nonumber\\
  = & q_{k}g_{k,s}T_{k}-q_{k}g_{k,s}\sum_{i\ne k}q_{i}g_{i,s}I_{\{a_{i}=a_{k}^{'}\}} > 0. \label{eq:lslsls2}
\end{align}

For case 3), by the similar argument in case 2), when $a_{k}>0$ and $a_{k}^{'}=0$,  we can also show that $Z_{k}(a_{k},a_{-k})>Z_{k}(a_{k}^{'},a_{-k})$ implies $\Phi(a_{k},a_{-k})>\Phi(a_{k}^{'},a_{-k})$.

Combining results in the three cases above, we can hence conclude that the multi-user computation offloading game is a potential game.
\end{proof}

The key idea of the proof is to show that when a user $k\in\mathcal{N}$ updates its current decision
$a_{k}$ to a better decision $a_{k}^{'}$, the decrease in its overhead function will lead to the decrease in the potential function of the multi-user computation offloading game. Theorem \ref{thm:The-general-decentralized} implies that any asynchronous
better response update process is guaranteed to reach a Nash equilibrium within
a finite number of iterations. We shall exploit such finite improvement property for the distributed computation offloading algorithm design
in following Section \ref{sec:Decentralized-Computation-Offloa-1}.

\section{Distributed Computation Offloading Algorithm}\label{sec:Decentralized-Computation-Offloa-1}

In this section we develop a distributed computation offloading algorithm in Algorithm \ref{alg:Decentralized-computation-offloa}
for achieving the Nash equilibrium of the multi-user computation offloading game.

\begin{algorithm}[tt]
\begin{algorithmic}[1]
\State \textbf{initialization:}
\State each mobile device user $n$ \textbf{chooses} the computation decision $a_{n}(0)=0$.
\State \textbf{end initialization\newline}

\Repeat{ for each user $n$ and each decision slot $t$ in parallel:}
        \State \textbf{transmit} the pilot signal  on the chosen channel $a_{n}(t)$ to the wireless base-station $s$.
        \State \textbf{receive} the information of the received  powers on all the channels from the wireless base-station $s$.
        \State \textbf{compute}  the best response set $\Delta_{n}(t)$.
        \If{ $\Delta_{n}(t)\neq\varnothing$}
            \State \textbf{send} RTU message to the cloud for contending for the decision update opportunity.
            \If{ \textbf{receive} the UP message from the cloud}
                \State \textbf{choose}  the decision $a_{n}(t+1)\in\Delta_{n}(t)$ for next slot.
            \Else{ choose the original decision $a_{n}(t+1)=a_{n}(t)$ for next slot.}
            \EndIf
        \Else{ \textbf{choose} the original decision $a_{n}(t+1)=a_{n}(t)$ for next slot.}
        \EndIf
\Until{END message is received from the cloud}

\end{algorithmic}
\caption{\label{alg:Decentralized-computation-offloa}Distributed Computation Offloading Algorithm}
\end{algorithm}

\subsection{Algorithm Design}
The motivation of using the distributed computation offloading algorithm is to enable mobile device
users to achieve a mutually satisfactory decision making, prior to the computation task execution. The key idea of the algorithm design is to utilize the
finite improvement property of the multi-user computation offloading game and let one mobile device user improve
its computation offloading decision at a time. Specifically, by using the clock signal from the wireless base-station for synchronization, we consider a slotted time structure for the computation offloading decision update. Each decision slot $t$ consists the following two stages:

(1) \textbf{Wireless Interference Measurement}: at this stage, we measure the interference on different channels for wireless access. Specifically, each mobile device user $n$ who selects decision $a_{n}(t)>0$ (i.e., cloud computing approach) at the current decision slot will transmit some pilot signal on its chosen channel $a_{n}(t)$ to the wireless base-station $s$. The wireless base-station then measures the total received  power $\rho_{m}(\boldsymbol{a}(t))\triangleq\sum_{i\in\mathcal{N}:a_{i}(t)=m}q_{i}g_{i,s}$ on each channel $m\in\mathcal{M}$ and feedbacks the information of the received  powers on all the channels (i.e., $\{\rho_{m}(\boldsymbol{a}(t)),m\in\mathcal{M}\}$) to the mobile device users. Accordingly, each user $n$ can obtain its  received interference $\mu_{n}(m,a_{-n}(t))$ from other users on each channel $m\in\mathcal{M}$ as
    \[
\mu_{n}(m,a_{-n}(t))=\begin{cases}
\rho_{m}(\boldsymbol{a}(t))-q_{n}g_{n,s}, & \mbox{if }a_{n}(t)=m,\\
\rho_{m}(\boldsymbol{a}(t)), & \mbox{otherwise.}
\end{cases}
\]
That is, for its current chosen channel $a_{n}(t)$, user $n$ determines the received interference by subtracting its own power from the total measured power; for other channels over which user $n$ does not transmit the pilot signal, the received interference is equal to the total measured power.

(2) \textbf{Offloading Decision Update}: at this stage, we exploit the finite improvement
property of the multi-user computation offloading game by having one mobile device user carry out a decision update. Based on the information of the measured interferences $\{\mu_{n}(m,a_{-n}(t)),m\in\mathcal{M}\}$ on different channels, each mobile device user $n$ first computes its set of best response update as
\begin{eqnarray*}
\Delta_{n}(t) & \triangleq & \{\tilde{a}:\tilde{a}=\arg\min_{a\in\mathcal{A}_{n}}Z_{n}(a,a_{-n}(t)) \mbox{ and }\\
 &  & Z_{n}(\tilde{a},a_{-n}(t))<Z_{n}(a_{n}(t),a_{-n}(t))\}.
\end{eqnarray*}
Then, if $\Delta_{n}(t)\neq\varnothing$ (i.e., user $n$ can improve its decision),
user $n$ will send a request-to-update (RTU) message to the cloud to indicate that it wants to contend for the decision update opportunity. Otherwise, user $n$ will not contend and  adhere to the current decision at next decision slot, i.e., $a_{n}(t+1)=a_{n}(t)$. Next, the cloud will randomly select one user $k$ out of the set of users who have sent the RTU messages and send the update-permission (UP) message to the user $k$ for updating its decision for the next slot as $a_{n}(t+1)\in\Delta_{n}(t)$. For other users who do not receive the UP message from the cloud, they will not update their decisions and choose the same decisions at next slot, i.e., $a_{n}(t+1)=a_{n}(t)$.

\subsection{Convergence Analysis}
According to the finite improvement property in Theorem \ref{thm:The-general-decentralized},
the algorithm will converge to a Nash equilibrium of the multi-user
computation offloading game within finite number of decision slots. In practice, we can implement that the computation offloading decision update process terminates when no RTU messages are received by the cloud. In this case, the cloud will broadcast the END message to all the mobile device users and each user will execute the computation task according to the decision obtained at the last decision slot by the algorithm. Due to the property of Nash equilibrium, no user has the incentive to deviate from the achieved decisions.

We then analyze the computational complexity of the distributed computation offloading algorithm. In each decision slot, each mobile device user will in parallel execute the operations in Lines $5$--$15$ of Algorithm \ref{alg:Decentralized-computation-offloa}. Since most operations only involve some basic arithmetical calculations, the dominating part is the computing of the best response update in Line $11$, which involves the sorting operation over $M$ channel measurement data and typically has a complexity of $\mathcal{O}(M\log M)$. The computational complexity in each decision slot is hence $\mathcal{O}(M\log M)$. Suppose that it takes $C$ decision slots for the algorithm to terminate. Then the total computational complexity of the distributed computation offloading algorithm is $\mathcal{O}(CM\log M)$. Let $T_{max}\triangleq\max_{n\in\mathcal{N}}\{T_{n}\},$ $Q_{n}\triangleq q_{n}g_{n,s},$ $Q_{max}\triangleq\max_{n\in\mathcal{N}}\{Q_{n}\},$ and $Q_{min}\triangleq\min_{n\in\mathcal{N}}\{Q_{n}\}$. For the number of decision slots $C$ for convergence, we have the following result.

\begin{theorem}\label{thm:convergence}
When $T_{n}$ and $Q_{n}$ are non-negative integers for any $n\in\mathcal{N}$,
the distributed computation offloading algorithm will terminate within
at most $\frac{Q_{max}^{2}}{2Q_{min}}N^{2}+\frac{T_{max}Q_{max}}{Q_{min}}N$
decision slots, i.e., $C\leq\frac{Q_{max}^{2}}{2Q_{min}}N^{2}+\frac{Q_{max}T_{max}}{Q_{min}}N$. \end{theorem}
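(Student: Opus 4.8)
The plan is to exploit the potential-game structure established in Theorem~\ref{thm:The-general-decentralized}: since every selected user performs a strict better-response update, the potential $\Phi(\boldsymbol{a})$ in (\ref{eq:p4}) strictly decreases at every decision slot prior to termination. Bounding the number of slots then reduces to (i) bounding the total range of $\Phi$ over the strategy space and (ii) lower-bounding the per-slot decrease of $\Phi$. Throughout I would write $Q_n = q_n g_{n,s}$, so that (\ref{eq:p4}) reads $\Phi(\boldsymbol{a}) = \tfrac12 \sum_{i}\sum_{j\ne i} Q_i Q_j I_{\{a_i=a_j\}} I_{\{a_i>0\}} + \sum_i Q_i T_i I_{\{a_i=0\}}$.

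First I would bound the range of $\Phi$. Since every summand is a product of non-negative quantities, $\Phi(\boldsymbol{a}) \ge 0$ for all $\boldsymbol{a}$, giving the lower bound $\Phi_{min} \ge 0$. For the upper bound I bound the first (interference) sum by replacing each $Q_iQ_j$ with $Q_{max}^2$ over the at most $N(N-1)$ ordered pairs, yielding $\tfrac12 N(N-1) Q_{max}^2 \le \tfrac12 N^2 Q_{max}^2$, and I bound the second sum by replacing each $Q_i T_i$ with $Q_{max}T_{max}$ over the $N$ terms, yielding $N Q_{max} T_{max}$. Hence $0 \le \Phi(\boldsymbol{a}) \le \tfrac12 N^2 Q_{max}^2 + N Q_{max} T_{max}$ on the whole decision space.

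The key step, and the one place the integrality hypothesis is genuinely used, is to show that a single better-response update decreases $\Phi$ by at least $Q_{min}$. Here I would reuse the three-case analysis from the proof of Theorem~\ref{thm:The-general-decentralized}. In case~1 ($a_k,a_k'>0$) the decrease equals $Q_k\bigl(\mu_k(a_k,a_{-k}) - \mu_k(a_k',a_{-k})\bigr)$ by (\ref{eq:lslsls1}); since each received interference $\mu_k$ is a sum of the integers $Q_i$ and the difference is strictly positive, it is at least $1$, so the decrease is at least $Q_k \ge Q_{min}$. In case~2 ($a_k=0$, $a_k'>0$) the decrease equals $Q_k(T_k - \mu_k(a_k',a_{-k}))$ by (\ref{eq:lslsls2}), again a positive integer multiple of $Q_k$, hence at least $Q_{min}$; case~3 is symmetric. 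Thus every update shrinks $\Phi$ by at least the positive integer $Q_{min}$.

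Combining the two estimates, the number of update slots, and therefore the number of decision slots $C$ since each non-terminal slot performs exactly one update, is at most $(\Phi_{max} - \Phi_{min})/Q_{min} \le \bigl(\tfrac12 N^2 Q_{max}^2 + N Q_{max} T_{max}\bigr)/Q_{min} = \tfrac{Q_{max}^2}{2Q_{min}} N^2 + \tfrac{Q_{max}T_{max}}{Q_{min}} N$, which is exactly the claimed bound. I expect the per-slot decrement argument to be the main obstacle: without integrality of $T_n$ and $Q_n$ the better-response improvements could be arbitrarily small and no finite slot count would follow, so this hypothesis is precisely what converts the strict monotonicity of $\Phi$ into a quantitative convergence-time estimate.
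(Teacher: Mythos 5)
Your proposal is correct and follows essentially the same route as the paper's own proof: bound the potential $\Phi$ in (\ref{eq:p4}) between $0$ and $\tfrac12 Q_{max}^2 N^2 + Q_{max}T_{max}N$, use the integrality of $T_n$ and $Q_n$ together with the three-case analysis of Theorem~\ref{thm:The-general-decentralized} to show each better-response update decreases $\Phi$ by at least $Q_{min}$, and divide the range by the per-slot decrement. The paper's argument is identical in structure, cases, and use of the integrality hypothesis, so there is nothing to add.
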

\begin{proof}
First of all, according to (\ref{eq:p4}), we know that
\begin{eqnarray}
0\leq\Phi(\boldsymbol{a}) & \leq & \frac{1}{2}\sum_{i=1}^{N}\sum_{j=1}^{N}Q_{max}^{2}+\sum_{i=1}^{N}Q_{max}T_{max}\nonumber \\
 & = & \frac{1}{2}Q_{max}^{2}N^{2}+Q_{max}T_{max}N.\label{eq:lsls2}
\end{eqnarray}

During a decision slot, suppose that a user $k\in\mathcal{N}$ updates
its current decision $a_{k}$ to the decision $a_{k}^{'}$ and this
leads to a decrease in its overhead function, i.e., $Z_{k}(a_{k},a_{-k})>Z_{k}(a_{k}^{'},a-_{k})$.
According to the definition of potential game, we will show that this
also leads to a decrease in the potential function by at least $Q_{min}$,
i.e.,
\begin{equation}
\Phi(a_{k},a_{-k})\geq\Phi(a_{k}^{'},a-_{k})+Q_{min}.\label{eq:lsls3}
\end{equation}
We will consider the following three cases: 1) $a_{k}>0$ and $a_{k}^{'}>0$;
2)$a_{k}=0$ and $a_{k}^{'}>0$; 3) $a_{k}>0$ and $a_{k}^{'}=0$.

For case 1), according to (\ref{eq:lslsls1}) in the proof of Theorem \ref{thm:The-general-decentralized}, we know that
\begin{eqnarray}
 &  & \Phi(a_{k},a_{-k})-\Phi(a_{k}^{'},a-_{k})\nonumber \\
 & = & Q_{k}\left(\sum_{i\neq k}Q_{i}I_{\{a_{i}=a_{k}\}}-\sum_{i\neq k}Q_{i}I_{\{a_{i}=a_{k}^{'}\}}\right)>0.\label{eq:lsls1}
\end{eqnarray}
Since $Q_{i}$ are integers for any $i\in\mathcal{N}$, we know that
\[\sum_{i\neq k}Q_{i}I_{\{a_{i}=a_{k}\}}\geq\sum_{i\neq k}Q_{i}I_{\{a_{i}=a_{k}^{'}\}}+1.\]
Thus, according to (\ref{eq:lsls1}), we have
\[\Phi(a_{k},a_{-k})  \geq  \Phi(a_{k}^{'},a-_{k})+Q_{k}\geq \Phi(a_{k}^{'},a-_{k})+Q_{min}.\]

For case 2), according to (\ref{eq:lslsls2}) in the proof of Theorem \ref{thm:The-general-decentralized}, we know that
\[\Phi(a_{k},a_{-k})-\Phi(a_{k}^{'},a-_{k})= Q_{k}\left(T_{k}-\sum_{i\neq k}Q_{i}I_{\{a_{i}=a_{k}^{'}\}}\right)>0.\]
By the similar augment as in case 1), we have
\[\Phi(a_{k},a_{-k})  \geq  \Phi(a_{k}^{'},a-_{k})+Q_{k}\geq \Phi(a_{k}^{'},a-_{k})+Q_{min}.\]

For case 3), by the similar argument in case 2), we can also show
that $\Phi(a_{k},a_{-k})\geq\Phi(a_{k}^{'},a-_{k})+Q_{min}.$

Thus, according to (\ref{eq:lsls2}) and (\ref{eq:lsls3}), we know
that the algorithm will terminate by driving the potential function
$\Phi(\boldsymbol{a})$ to a minimal point within at most $\frac{Q_{max}^{2}}{2Q_{min}}N^{2}+\frac{Q_{max}T_{max}}{Q_{min}}N$
decision slots. \end{proof}

Theorem \ref{thm:convergence} shows that under mild conditions the distributed computation offloading algorithm can converge in a fast manner with at most a quadratic convergence time (i.e., upper bound). Note that in practice the transmission power and  channel gain are non-negative (i.e., $q_{n}, g_{n,s}\geq 0$), we hence have $Q_{n}=\{q_{n}g_{n,s}\}\geq 0$. The non-negative condition of $T_{n}\geq 0$ ensures that a user could have the chances to achieve beneficial cloud computing (otherwise, the user should always choose the local computing). For ease of exposition, we consider that $Q_{n}$ and $T_{n}$ are integers, which can also provide a good approximation for the general case that $Q_{n}$ and $T_{n}$ could be real number. For the general case, numerical results in Section \ref{sec:Numerical-Results}  demonstrate that the distributed computation offloading algorithm can also converge in a fast manner with the number of decision slots for convergence increasing (almost) linearly with the number of users $N$. \rev{Since the time length of a slot in wireless systems is typically at time scale of microseconds (e.g., the length of a slot is around $70$ microseconds in LTE system \cite{innovations2010lte}), this implies that the time for the computation offloading decision update process is very short and can be neglectable, compared with the computation execution process, which is typically at the time scale of millisecond/seconds (e.g., for mobile gaming application, the execution time is typically several hundred milliseconds \cite{dey2013addressing}).}

%
%
%
%
%
%
%
%
%
%
%
%

\section{Performance Analysis}\label{performance}

We then analyze the performance of the distributed
computation offloading algorithm. Following the definition of price
of anarchy (PoA) in game theory \cite{roughgarden2005selfish}, we will quantify the efficiency
ratio of the worst-case Nash equilibrium over the centralized optimal
solutions in terms of two important metrics: the number of beneficial cloud computing users and the system-wide computation overhead.

\subsection{Metric I: Number of Beneficial Cloud Computing Users}
We first study the PoA in terms of the metric of the number of beneficial cloud computing users in the system. Let $\Upsilon$ be the set of Nash equilibria of the multi-user
computation offloading game and $\boldsymbol{a}^{*}=(a_{1}^{*},...,a_{N}^{*})$
denote the the centralized optimal solution that maximizes the number
of beneficial cloud computing users. Then the PoA is defined as
\[
\mbox{PoA}=\frac{\min_{\boldsymbol{a}\in\Upsilon}\sum_{n\in\mathcal{N}}I_{\{a_{n}>0\}}}{\sum_{n\in\mathcal{N}}I_{\{a_{n}^{*}>0\}}}.
\]
For the metric of the number of beneficial cloud computing users,
a larger PoA implies a better performance of the multi-user computation
offloading game solution. Recall that $T_{max}\triangleq\max_{n\in\mathcal{N}}\{T_{n}\},$
$T_{min}\triangleq\min_{n\in\mathcal{N}}\{T_{n}\},$ $Q_{max}\triangleq\max_{n\in\mathcal{N}}\{q_{n}g_{n,s}\},$
and $Q_{min}\triangleq\min_{n\in\mathcal{N}}\{q_{n}g_{n,s}\}$. We can show the following result.
\begin{theorem}\label{thm:PoA1}
Consider the multi-user computation offloading game, where $T_{n}\geq0$
for each user $n\in\mathcal{N}$. The PoA for the metric of the number
of beneficial cloud computing users satisfies that
\[
1\geq\mbox{PoA}\geq\frac{\left\lfloor \frac{T_{min}}{Q_{max}}\right\rfloor }{\left\lfloor \frac{T_{max}}{Q_{min}}\right\rfloor +1}.
\]
\end{theorem}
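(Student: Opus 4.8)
The plan is to prove the two inequalities separately, the upper bound being essentially free and the lower bound requiring a channel-by-channel counting argument. For the upper bound $\mathrm{PoA}\le 1$: by the corollary preceding this theorem, at any Nash equilibrium $\boldsymbol{a}$ every user with $a_n>0$ is a beneficial cloud computing user, so $\boldsymbol{a}$ is a feasible point of the centralized problem (\ref{eq:M1}). Hence its objective value $\sum_{n}I_{\{a_n>0\}}$ cannot exceed the optimum $\sum_{n}I_{\{a_n^*>0\}}$, and since this holds for every equilibrium it holds for the minimizing one, giving $\mathrm{PoA}\le 1$.

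For the lower bound I would first bound the optimum from above. Writing $Q_i=q_ig_{i,s}$ and letting $k_m(\boldsymbol{a})$ be the number of users assigned to channel $m$ under a profile $\boldsymbol{a}$, note that at $\boldsymbol{a}^*$ each cloud user $n$ is beneficial, so by Lemma \ref{lem:Given-the-strategies} its interference obeys $\mu_n(\boldsymbol{a}^*)\le T_n\le T_{max}$. The other $k_{a_n^*}(\boldsymbol{a}^*)-1$ users sharing its channel each contribute at least $Q_{min}$, so $(k_{a_n^*}(\boldsymbol{a}^*)-1)Q_{min}\le T_{max}$, whence $k_m(\boldsymbol{a}^*)\le\lfloor T_{max}/Q_{min}\rfloor+1$ for every channel $m$. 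Summing over the $M$ channels yields $\sum_{n}I_{\{a_n^*>0\}}\le M(\lfloor T_{max}/Q_{min}\rfloor+1)$.

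Next I would bound every equilibrium from below. Fix an equilibrium $\boldsymbol{a}$. If all users choose cloud computing, then $\sum_{n}I_{\{a_n>0\}}=N\ge\sum_{n}I_{\{a_n^*>0\}}$, so the ratio is at least $1$ and the claimed bound (which is at most $1$, since $T_{min}/Q_{max}\le T_{max}/Q_{min}$) holds trivially. Otherwise some user $n$ has $a_n=0$; because $\boldsymbol{a}$ is an equilibrium, $n$ cannot strictly lower its overhead by switching to any channel $m$, and the equivalence established in Lemma \ref{lem:Given-the-strategies} converts this into $\mu_n(m,a_{-n})\ge T_n$ for every $m\in\mathcal{M}$. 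As $n$ is not transmitting, $\mu_n(m,a_{-n})$ equals the full channel power $\rho_m(\boldsymbol{a})=\sum_{i:a_i=m}Q_i$, so every channel carries received power at least $T_n\ge T_{min}$. Since each user contributes at most $Q_{max}$, channel $m$ must hold at least $\lceil T_{min}/Q_{max}\rceil\ge\lfloor T_{min}/Q_{max}\rfloor$ users, giving $\sum_{n}I_{\{a_n>0\}}=\sum_{m}k_m(\boldsymbol{a})\ge M\lfloor T_{min}/Q_{max}\rfloor$.

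Combining the two estimates, every equilibrium satisfies $\sum_{n}I_{\{a_n>0\}}/\sum_{n}I_{\{a_n^*>0\}}\ge \lfloor T_{min}/Q_{max}\rfloor/(\lfloor T_{max}/Q_{min}\rfloor+1)$, and because $\sum_{n}I_{\{a_n^*>0\}}$ is a fixed constant, minimizing the numerator over $\Upsilon$ preserves the bound, which is precisely the PoA. The step I expect to be the main obstacle is the local-user argument: one must invoke the full equivalence of Lemma \ref{lem:Given-the-strategies} (not merely the sufficiency as stated) to turn the no-profitable-deviation condition into the simultaneous lower bounds $\rho_m\ge T_n$ on all channels, correctly identify $\mu_n(m,a_{-n})$ with $\rho_m$ for a user currently computing locally, and separate out the all-cloud equilibrium so that the pigeonhole counting is never applied vacuously.
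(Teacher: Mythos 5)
Your proposal is correct and follows essentially the same route as the paper's own proof: the upper bound $\mbox{PoA}\le 1$ from optimality of $\boldsymbol{a}^{*}$, the bound $\sum_{n}I_{\{a_{n}^{*}>0\}}\le M\left(\left\lfloor T_{max}/Q_{min}\right\rfloor +1\right)$ via the beneficial-cloud-computing constraint at the optimum, and the per-channel lower bound $M\left\lfloor T_{min}/Q_{max}\right\rfloor$ at any equilibrium obtained from a local-computing user's no-deviation condition, with the all-cloud equilibrium treated separately. Your remark about needing the equivalence (not just sufficiency) in Lemma \ref{lem:Given-the-strategies} matches how the paper actually uses it, so there is no substantive difference to report.
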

\begin{proof}
Let $\tilde{\boldsymbol{a}}\in\Upsilon$ be an arbitrary Nash equilibrium
of the game. Since the centralized optimum $\boldsymbol{a}^{*}$ maximizes
the number of beneficial cloud computing users, we hence have that
$\sum_{n\in\mathcal{N}}I_{\{\tilde{a}_{n}>0\}}\leq\sum_{n\in\mathcal{N}}I_{\{a_{n}^{*}>0\}}$
and PoA$\leq1$. Moreover, if $\sum_{n\in\mathcal{N}}I_{\{\tilde{a}_{n}>0\}}=N$,
we have $\sum_{n\in\mathcal{N}}I_{\{a_{n}^{*}>0\}}=N$ and PoA$=1$.
In following proof, we will focus on the case that $\sum_{n\in\mathcal{N}}I_{\{\tilde{a}_{n}>0\}}<N.$

First, we show that for the centralized optimum $\boldsymbol{a}^{*}$,
we have $\sum_{n\in\mathcal{N}}I_{\{a_{n}^{*}>0\}}\leq M\left(\left\lfloor \frac{T_{max}}{Q_{min}}\right\rfloor +1\right)$,
where $M$ is the number of channels. To proceed, we first denote
$C_{m}(\boldsymbol{a})\triangleq\sum_{i=1}^{N}I_{\{a_{i}=m\}}$ as
the number of users on channel $m$ for a given decision profile $\boldsymbol{a}$.
Since $T_{n}\geq0,$ we have $K_{n}^{c}(a_{n},a_{-n}=\boldsymbol{0})\geq K_{n}^{m}$ for $a_{n}>0$,
i.e., there exists at least a user that can achieve beneficial cloud
computing by letting the user choose cloud computing $a_{n}$ and the other users choose local computing. This implies that for the centralized optimum $\boldsymbol{a}^{*}$,
we have $\sum_{n\in\mathcal{N}}I_{\{a_{n}^{*}>0\}}\geq1$. Let $C_{m^{*}}(\boldsymbol{a}^{*})=\max_{m\in\mathcal{M}}\{C_{m}(\boldsymbol{a}^{*})\}$,
i.e., channel $m^{*}$ is the one with most users. Suppose user $n$
is on the channel $m^{*}$. Then we know that
\[
\sum_{i\in\mathcal{N}\backslash\{n\}:a_{i}=m^{*}}q_{i}g_{i,s}\leq T_{n},
\]
which implies that
\begin{eqnarray*}
\left(C_{m^{*}}(\boldsymbol{a}^{*})-1\right)Q_{min} & \leq & \sum_{i\in\mathcal{N}\backslash\{n\}:a_{i}=m^{*}}q_{i}g_{i,s}\\
 & \leq & T_{n}\leq T_{max}.
\end{eqnarray*}
It follows that
\begin{eqnarray*}
C_{m^{*}}(\boldsymbol{a}^{*}) & \leq & \left\lfloor \frac{T_{max}}{Q_{min}}\right\rfloor +1.
\end{eqnarray*}
We hence have that
\begin{align}
\sum_{n\in\mathcal{N}}I_{\{a_{n}^{*}>0\}} = & \sum_{m=1}^{M}C_{m}(\boldsymbol{a}^{*})\leq MC_{m^{*}}(\boldsymbol{a}^{*})  \\
\leq & M\left(\left\lfloor \frac{T_{max}}{Q_{min}}\right\rfloor +1\right).\label{eq:PoA1}
\end{align}

Second, for the Nash equilibrium $\tilde{\boldsymbol{a}}$, since
$\sum_{n\in\mathcal{N}}I_{\{\tilde{a}_{n}>0\}}<N$, there exists at
lease one user $\tilde{n}$ that chooses the local computing approach,
i.e., $a_{\tilde{n}}=0.$ Since $\tilde{\boldsymbol{a}}$ is a Nash
equilibrium, we have that user $\tilde{n}$ cannot reduce its overhead
by choosing computation offloading via any channel $m\in\mathcal{M}$.
We then know that
\[
\sum_{i\in\mathcal{N}\backslash\{\tilde{n}\}:\tilde{a}_{i}=m}q_{i}g_{i,s}\geq T_{\tilde{n}},\forall m\in\mathcal{M},
\]
which implies that
\begin{eqnarray*}
C_{m}(\tilde{\boldsymbol{a}})Q_{max} & \geq & \sum_{i\in\mathcal{N}\backslash\{\tilde{n}\}:\tilde{a}_{i}=m}q_{i}g_{i,s}.\\
 & \geq & T_{\tilde{n}}\geq T_{min}.
\end{eqnarray*}
It follows that
\begin{eqnarray*}
C_{m^{*}}(\tilde{\boldsymbol{a}}) & \geq & \frac{T_{min}}{Q_{max}}\geq \left\lfloor \frac{T_{min}}{Q_{max}}\right\rfloor .
\end{eqnarray*}
Thus, we have
\begin{equation}
\sum_{n\in\mathcal{N}}I_{\{\tilde{a}_{n}>0\}}=\sum_{m=1}^{M}C_{m}(\tilde{\boldsymbol{a}})\geq M\left\lfloor \frac{T_{min}}{Q_{max}}\right\rfloor .\label{eq:PoA2}
\end{equation}
Based on (\ref{eq:PoA1}) and (\ref{eq:PoA2}), we can conclude that
$\mbox{PoA}\geq\frac{\left\lfloor \frac{T_{min}}{Q_{max}}\right\rfloor }{\left\lfloor \frac{T_{max}}{Q_{min}}\right\rfloor +1},$
which completes the proof.
\end{proof}

Recall that the constraint $T_{n}\geq0$ ensures that some user can achieve beneficial cloud computing in the centralized optimum, and avoid the possibility of the PoA involving ``division by zero". Theorem \ref{thm:PoA1} implies that the worst-case performance of the  Nash equilibrium will be close
to the centralized optimum $\boldsymbol{a}^{*}$ when the gap between the best and worst users in terms of wireless access performance $q_{n},g_{n,s}$ and interference tolerance threshold $T_{n}$ for achieving beneficial cloud computing is not large.

\subsection{Metric II: System-wide Computation Overhead}
We then study the PoA in terms of another metric of the total computation overhead of all the mobile device users in the system, i.e., $\sum_{n\in\mathcal{N}}Z_{n}(\boldsymbol{a})$. Let $\bar{\boldsymbol{a}}$ be the centralized optimal solution that minimizes the system-wide computation overhead, i.e., $\bar{\boldsymbol{a}}=\arg \min_{\boldsymbol{a}\in\prod_{n=1}^{N}\mathcal{A}_{n}}\sum_{n\in\mathcal{N}}Z_{n}(\boldsymbol{a})$. Similarly, we can define the PoA as
\[
\mbox{PoA}=\frac{\max_{\boldsymbol{a}\in\Upsilon}\sum_{n\in\mathcal{N}}Z_{n}(\boldsymbol{a})}{\sum_{n\in\mathcal{N}}Z_{n}(\bar{\boldsymbol{a}})}.
\]
Note that, different from the metric of the number of beneficial cloud computing users, a smaller system-wide computation overhead is more desirable. Hence, for the metric of the system-wide computation overhead, a smaller PoA  is better. Let $K_{n,min}^{c}\triangleq\frac{\left(\lambda_{n}^{t}+\lambda_{n}^{e}q_{n}\right)b_{n}}{w\log_{2}\left(1+\frac{q_{n}g_{n,s}}{\varpi_{0}}\right)}+\lambda_{n}^{e}L_{n}+\lambda_{n}^{t}t_{n,exe}^{c}$ and $K_{n,max}^{c}\triangleq\frac{\left(\lambda_{n}^{t}+\lambda_{n}^{e}q_{n}\right)b_{n}}{w\log_{2}\left(1+\frac{q_{n}g_{n,s}}{\varpi_{0}+\left(\sum_{i\in\mathcal{N}\backslash\{n\}}q_{i}g_{i,s}\right)/M}\right)}+\lambda_{n}^{e}L_{n}+\lambda_{n}^{t}t_{n,exe}^{c}.$
We can show the following result.
\begin{theorem}
\label{thm:PoA2}For the multi-user computation offloading
game, the PoA of the metric of the system-wide computation overhead satisfies that \[ 1\leq\mbox{PoA}\leq\frac{\sum_{n=1}^{N}\min\{K_{n}^{m},K_{n,max}^{c}\}}{\sum_{n=1}^{N}\min\{K_{n}^{m},K_{n,min}^{c}\}}.\]\end{theorem}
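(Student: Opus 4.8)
The plan is to bound the numerator and denominator of $\mbox{PoA}$ separately and then divide. The lower bound $\mbox{PoA}\geq 1$ is essentially free: since $\bar{\boldsymbol{a}}$ minimizes the system-wide overhead over the entire strategy space $\prod_{n=1}^{N}\mathcal{A}_{n}$, which contains every Nash equilibrium, any equilibrium profile $\boldsymbol{a}\in\Upsilon$ satisfies $\sum_{n}Z_{n}(\boldsymbol{a})\geq\sum_{n}Z_{n}(\bar{\boldsymbol{a}})$; taking the worst equilibrium in the numerator gives $\mbox{PoA}\geq 1$ immediately. This reduces everything to the two one-sided estimates $\sum_{n}Z_{n}(\bar{\boldsymbol{a}})\geq\sum_{n}\min\{K_{n}^{m},K_{n,min}^{c}\}$ and $\max_{\boldsymbol{a}\in\Upsilon}\sum_{n}Z_{n}(\boldsymbol{a})\leq\sum_{n}\min\{K_{n}^{m},K_{n,max}^{c}\}$.

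For the denominator I would argue pointwise. For any profile $\boldsymbol{a}$ and any user $n$, either $a_{n}=0$ and $Z_{n}=K_{n}^{m}$, or $a_{n}>0$ and $Z_{n}=K_{n}^{c}(\boldsymbol{a})\geq K_{n,min}^{c}$, since $K_{n,min}^{c}$ is precisely the cloud overhead evaluated at the smallest possible received interference (background noise $\varpi_{0}$ only) and $K_{n}^{c}$ is increasing in the interference. Hence $Z_{n}(\boldsymbol{a})\geq\min\{K_{n}^{m},K_{n,min}^{c}\}$ for every profile; summing over $n$ and specializing to $\bar{\boldsymbol{a}}$ yields the lower bound.

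The crux, which I expect to be the main obstacle, is the numerator bound: at any Nash equilibrium $\tilde{\boldsymbol{a}}$ I must show $Z_{n}(\tilde{\boldsymbol{a}})\leq\min\{K_{n}^{m},K_{n,max}^{c}\}$ for each $n$. The estimate $Z_{n}\leq K_{n}^{m}$ follows at once from the equilibrium condition, because deviating to local computing ($a_{n}=0$) is always available and costs $K_{n}^{m}$. The harder half, $Z_{n}\leq K_{n,max}^{c}$, I would establish by a pigeonhole argument on the per-channel interferences $\mu_{n}(m,a_{-n})$: since each other user contributes its power $q_{i}g_{i,s}$ to at most one channel, $\sum_{m=1}^{M}\mu_{n}(m,a_{-n})\leq\sum_{i\neq n}q_{i}g_{i,s}$, so the least-interfered channel $m^{*}$ satisfies $\mu_{n}(m^{*},a_{-n})\leq\frac{1}{M}\sum_{i\neq n}q_{i}g_{i,s}$, whence the offloading cost achievable on $m^{*}$ is at most $K_{n,max}^{c}$. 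The equilibrium property then forces $Z_{n}(\tilde{\boldsymbol{a}})$ below the cost of deviating to $m^{*}$: if $n$ offloads, $Z_{n}=K_{n}^{c}(\tilde{\boldsymbol{a}})\leq K_{n}^{c}(m^{*},a_{-n})\leq K_{n,max}^{c}$, and if $n$ computes locally, $Z_{n}=K_{n}^{m}\leq K_{n}^{c}(m^{*},a_{-n})\leq K_{n,max}^{c}$. Either way $Z_{n}\leq\min\{K_{n}^{m},K_{n,max}^{c}\}$; summing over $n$, taking the worst equilibrium, and dividing by the denominator bound gives the claimed upper bound on $\mbox{PoA}$. The subtle points I would double-check are that the per-channel powers sum in the correct direction (users choosing local computing contribute nothing to $\sum_{m}\mu_{n}(m,a_{-n})$, keeping the inequality valid) and that $K_{n}^{c}$ is genuinely monotone in interference, so that ``least-interfered channel'' indeed translates into ``cheapest channel to offload on.''
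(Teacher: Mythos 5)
Your proof is correct and takes essentially the same route as the paper: the heart of both arguments is that at a Nash equilibrium the least-interfered channel carries at most the average interference $\frac{1}{M}\sum_{i\in\mathcal{N}\backslash\{n\}}q_{i}g_{i,s}$, which together with the unilateral-deviation property yields $Z_{n}\leq\min\{K_{n}^{m},K_{n,max}^{c}\}$, while the optimum is bounded below by $\min\{K_{n}^{m},K_{n,min}^{c}\}$. The only differences are presentational and in your favor: you state the pigeonhole bound directly (the paper argues by contradiction), you handle the local-computing users at equilibrium explicitly, and your pointwise lower bound $Z_{n}(\boldsymbol{a})\geq\min\{K_{n}^{m},K_{n,min}^{c}\}$ for \emph{any} profile is a cleaner version of the paper's optimum-specific argument, which as written contains inequality-direction slips that your formulation avoids.
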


\begin{proof}
Let $\tilde{\boldsymbol{a}}\in\Upsilon$ be an arbitrary Nash equilibrium
of the game. Since the centralized optimum $\boldsymbol{a}^{*}$ minimizes
the system-wide computation overhead, we hence first have that PoA$\geq1$.

For a Nash equilibrium  $\hat{\boldsymbol{a}}\in\Upsilon$, if $\bar{a}_{n}>0$, we shall show that the interference that a user $n$ receives from other other users on the wireless access channel $\hat{a}_{n}$ is at most \[\left(\sum_{i\in\mathcal{N}\backslash\{n\}}q_{i}g_{i,s}\right)/M.\] We prove this by contradiction. Suppose that a user $n$ at the Nash
equilibrium $\hat{\boldsymbol{a}}$ receives an interference greater
than $\left(\sum_{i\in\mathcal{N}\backslash\{n\}}q_{i}g_{i,s}\right)/M.$
Then, we have that
\begin{equation}
\sum_{i\in\mathcal{N}\backslash\{n\}:\hat{a}_{i}=\hat{a}_{n}}q_{n}g_{n,s}>\left(\sum_{i\in\mathcal{N}\backslash\{n\}}q_{i}g_{i,s}\right)/M.\label{eq:ff1}
\end{equation}
According to the property of Nash equilibrium such that no user can
improve by changing the channel unilaterally, we also have that
\begin{eqnarray*}
 &  & \sum_{i\in\mathcal{N}\backslash\{n\}:\hat{a}_{i}=m}q_{n}g_{n,s}\\
 & \geq & \sum_{i\in\mathcal{N}\backslash\{n\}:\hat{a}_{i}=\hat{a}_{n}}q_{n}g_{n,s},\forall m\in\mathcal{M}.
\end{eqnarray*}
This implies that
\begin{eqnarray}
 &  & \sum_{m=1}^{M}\sum_{i\in\mathcal{N}\backslash\{n\}:\hat{a}_{i}=m}q_{n}g_{n,s}\nonumber \\
 & \geq & M\left(\sum_{i\in\mathcal{N}\backslash\{n\}:\hat{a}_{i}=\hat{a}_{n}}q_{n}g_{n,s}\right).\label{eq:ff2}
\end{eqnarray}
According to (\ref{eq:ff1}) and (\ref{eq:ff2}), we now reach a contradiction
that
\begin{eqnarray*}
 &  & \left(\sum_{i\in\mathcal{N}\backslash\{n\}}q_{i}g_{i,s}\right)/M\\
 & < & \sum_{i\in\mathcal{N}\backslash\{n\}:\hat{a}_{i}=\hat{a}_{n}}q_{n}g_{n,s}\\
 & \leq & \left(\sum_{m=1}^{M}\sum_{i\in\mathcal{N}\backslash\{n\}:\hat{a}_{i}=m}q_{n}g_{n,s}\right)/M\\
 & \leq & \left(\sum_{i\in\mathcal{N}\backslash\{n\}}q_{i}g_{i,s}\right)/M.
\end{eqnarray*}
Thus, a user $n$ at the Nash equilibrium $\hat{\boldsymbol{a}}$
receives an interference not greater than $\left(\sum_{i\in\mathcal{N}\backslash\{n\}}q_{i}g_{i,s}\right)/M$.
Based on this, if $\hat{a}_{n}>0$, we hence have that
\[
r_{n}(\hat{\boldsymbol{a}})\geq w\log_{2}\left(1+\frac{q_{n}g_{n,s}}{\varpi_{0}+\left(\sum_{i\in\mathcal{N}\backslash\{n\}}q_{i}g_{i,s}\right)/M}\right),
\]
which implies that
\begin{align*}
 & K_{n}^{c}(\hat{\boldsymbol{a}}) \\
=& \frac{\left(\lambda_{n}^{t}+\lambda_{n}^{e}q_{n}\right)b_{n}}{r_{n}(\hat{\boldsymbol{a}})}+\lambda_{n}^{e}L_{n}+\lambda_{n}^{t}t_{n,exe}^{c}\\
    \geq & \frac{\left(\lambda_{n}^{t}+\lambda_{n}^{e}q_{n}\right)b_{n}}{w\log_{2}\left(1+\frac{q_{n}g_{n,s}}{\varpi_{0}+\left(\sum_{i\in\mathcal{N}\backslash\{n\}}q_{i}g_{i,s}\right)/M}\right)}+\lambda_{n}^{e}L_{n}+\lambda_{n}^{t}t_{n,exe}^{c}\\
= & K_{n,max}^{c}.
\end{align*}
Moreover, if $K_{n}^{m}<K_{n,max}^{c}$ and $\hat{a}_{n}>0$, then the user can always
improve by switching to the local computing approach (i.e., $\hat{a}_{n}=0$), we thus know
that \begin{align}Z_{n}(\hat{\boldsymbol{a}})\leq\min\{K_{n}^{m},K_{n,max}^{c}\}.\label{ff33}\end{align}

For the centralized optimal solution $\bar{\boldsymbol{a}}$,
if $\bar{a}_{n}>0$, we have that
\begin{eqnarray*}
r_{n}(\bar{\boldsymbol{a}}) & = & w\log_{2}\left(1+\frac{q_{n}g_{n,s}}{\varpi_{0}+\sum_{i\in\mathcal{N}\backslash\{n\}:\bar{a}_{i}=\bar{a}_{n}}q_{i}g_{i,s}}\right)\\
 &  & \leq w\log_{2}\left(1+\frac{q_{n}g_{n,s}}{\varpi_{0}}\right),
\end{eqnarray*}
 which implies that
\begin{align*}
  & K_{n}^{c}(\bar{\boldsymbol{a}}) \\
 = & \frac{\left(\lambda_{n}^{t}+\lambda_{n}^{e}q_{n}\right)b_{n}}{r_{n}(\bar{\boldsymbol{a}})}+\lambda_{n}^{e}L_{n}+\lambda_{n}^{t}t_{n,exe}^{c}\\
    \leq & \frac{\left(\lambda_{n}^{t}+\lambda_{n}^{e}q_{n}\right)b_{n}}{w\log_{2}\left(1+\frac{q_{n}g_{n,s}}{\varpi_{0}}\right)}+\lambda_{n}^{e}L_{n}+\lambda_{n}^{t}t_{n,exe}^{c}\\
 =  & K_{n,min}^{c}.
\end{align*}
Moreover, if $K_{n}^{m}<K_{n,min}^{c}$ and $\bar{a}_{n}>0$, then the system-wide computation overhead can be further reduced by letting user $n$ switch to the local computing approach (i.e., $\bar{a}_{n}=0$). This is because such a switching will not increase extra interference to other users.  We thus know
that \begin{align}Z_{n}(\bar{\boldsymbol{a}})\leq\min\{K_{n}^{m},K_{n,min}^{c}\}.\label{ff44}\end{align}

According to (\ref{ff33}) and (\ref{ff44}), we can conclude that \begin{align*}1\leq \mbox{PoA} & =  \frac{\max_{\boldsymbol{a}\in\Upsilon}\sum_{n\in\mathcal{N}}Z_{n}(\boldsymbol{a})}{\sum_{n\in\mathcal{N}}Z_{n}(\bar{\boldsymbol{a}})} \\ & \leq\frac{\sum_{n=1}^{N}\min\{K_{n}^{m},K_{n,max}^{c}\}}{\sum_{n=1}^{N}\min\{K_{n}^{m},K_{n,min}^{c}\}}.\end{align*}
\end{proof}

Intuitively, Theorem \ref{thm:PoA2} indicates that when the resource for wireless access increases
(i.e., the number of wireless access channels $M$ is larger and hence $K_{n,max}^{c}$
is smaller), the worst-case performance of  Nash equilibrium can be improved. Moreover, when users
have lower cost of local computing (i.e., $K_{n}^{m}$ is smaller), the worst-case Nash equilibrium is closer to the centralized optimum and hence
the PoA is lower.

\section{Extension to Wireless Contention Model}\label{Extension}
\rev{
In the previous sections above, we mainly focus on exploring the distributed
computation offloading problem under the wireless interference model
as given in (\ref{eq:R1}). Such wireless interference model is widely adopted in literature (see \cite{rappaport1996wireless,andrews2014will} and references therein) and can
well capture user's time average aggregate throughput in the cellular
communication scenario in which some physical layer channel access
scheme (e.g., CDMA) is adopted to allow multiple users
to share the same spectrum resource simultaneously and efficiently.
In this case, the multiple access among users for the shared spectrum
is carried out over the signal/symbol level (e.g., at the time scale
of microseconds), rather than the packet level (e.g., at the time
scale of milliseconds/seconds).

In this section, we extend our study to the wireless contention model
in which the multiple access among users for the shared spectrum is
carried out over the packet level. This is most relevant to the scenario
that some media access control protocol such as CSMA is implemented
such that users content to capture the channel for data packet transmission
for a long period (e.g., hundreds of milliseconds or several seconds)
in the WiFi-like networks (e.g., White-Space Network \cite{bahl2009white}). In this case,
we can model a user's expected throughput for computation offloading
over the chosen wireless channel $a_{n}>0$ as follows
\begin{equation}
r_{n}(\boldsymbol{a})=R_{n}\frac{W_{n}}{W_{n}+\sum_{i\in\mathcal{N}\backslash\{n\}:a_{i}=a_{n}}W_{i}},\label{eq:WC1}
\end{equation}
where $R_{n}$ is the data rate that user $n$ can achieve when it
can successfully gab the channel, and $W_{n}>0$ denotes user's weight
in the channel contention/sharing, with a larger weight $W_{n}$ implying that user $n$ is more dominant in grabbing the channel. When $W_{n}=1$ for any user $n$, it is relevant to the equal-sharing case (e.g., round robin scheduling).

Similarly, we can apply
the communication and computation models in the previous sections above to compute the overhead for
both local and cloud computing approaches, and model the distributed computation
offloading problem as a strategic game. For such multi-user computation
offloading game under the wireless contention model, we can show that
it exhibits the same structural property as the case under the wireless
interference model. We can first define the received ``interference'' (i.e., aggregated contention weights)
of user $n$ on the chosen channel as $\mu_{n}(\boldsymbol{a})=\sum_{i\in\mathcal{N}\backslash\{n\}:a_{i}=a_{n}}W_{i}$.
Then we can show the same threshold structure for the game as follow.
\begin{lemma}
\label{lem:WC1}For the multi-user computation offloading game under
the wireless contention model, a user $n$ achieves beneficial cloud
computing if its received interference $\mu_{n}(\boldsymbol{a})$
on the chosen channel $a_{n}>0$ satisfies that $\mu_{n}(\boldsymbol{a})\leq T_{n}$,
with the threshold
\[
T_{n}=\left(\frac{\left(\lambda_{n}^{t}t_{n}^{m}+\lambda_{n}^{e}e_{n}^{m}-\lambda_{n}^{e}L_{n}-\lambda_{n}^{t}t_{n,exe}^{c}\right)R_{n}}{\left(\lambda_{n}^{t}+\lambda_{n}^{e}q_{n}\right)b_{n}}-1\right)W_{n}.
\]
\end{lemma}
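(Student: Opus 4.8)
The plan is to mirror the argument used for the interference-model threshold in Lemma \ref{lem:Given-the-strategies}, changing only the rate expression that is substituted at the end. First I would start from Definition \ref{defn:beneficial}, i.e., the beneficial cloud computing condition $K_n^c(\boldsymbol{a}) \leq K_n^m$, and expand both sides using (\ref{eq:l3}) and (\ref{eq:c4}). The key observation is that the rate $r_n(\boldsymbol{a})$ enters the cloud overhead only through $t_{n,off}^c = b_n / r_n(\boldsymbol{a})$ and $e_n^c = q_n b_n / r_n(\boldsymbol{a}) + L_n$, so collecting the rate-dependent terms produces $(\lambda_n^t + \lambda_n^e q_n) b_n / r_n(\boldsymbol{a})$ plus the rate-independent terms $\lambda_n^e L_n + \lambda_n^t t_{n,exe}^c$. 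This is exactly the reduction performed in the proof of Lemma \ref{lem:Given-the-strategies}, and it converts the beneficial condition into a lower bound on the rate, namely $r_n(\boldsymbol{a}) \geq (\lambda_n^t + \lambda_n^e q_n) b_n / (\lambda_n^t t_n^m + \lambda_n^e e_n^m - \lambda_n^e L_n - \lambda_n^t t_{n,exe}^c)$.

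The next step is where the contention model departs from the interference model: I would substitute the expected-throughput expression (\ref{eq:WC1}), written as $r_n(\boldsymbol{a}) = R_n W_n / (W_n + \mu_n(\boldsymbol{a}))$ in terms of the aggregated contention weight $\mu_n(\boldsymbol{a})$. Because this function is strictly decreasing in $\mu_n(\boldsymbol{a})$ (more contenders yield a smaller channel share), the rate lower bound is equivalent to an upper bound on $\mu_n(\boldsymbol{a})$. Concretely, I would cross-multiply and solve the required-rate inequality for $\mu_n(\boldsymbol{a})$, obtaining $\mu_n(\boldsymbol{a}) \leq W_n \left( \frac{R_n (\lambda_n^t t_n^m + \lambda_n^e e_n^m - \lambda_n^e L_n - \lambda_n^t t_{n,exe}^c)}{(\lambda_n^t + \lambda_n^e q_n) b_n} - 1 \right)$, which is precisely the claimed threshold $T_n$. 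Notably, the algebra here is simpler than in Lemma \ref{lem:Given-the-strategies}: the affine-in-$\mu_n$ denominator of the throughput means no logarithm or exponentiation appears, so the threshold falls out of a single inversion rather than from inverting a $2^{(\cdots)}$ term.

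The one point requiring care is the sign of the quantity $D_n \triangleq \lambda_n^t t_n^m + \lambda_n^e e_n^m - \lambda_n^e L_n - \lambda_n^t t_{n,exe}^c$, which is the margin by which the local-computing overhead exceeds the rate-independent part of the cloud-computing overhead. We need $D_n > 0$ both for the required rate to be finite and positive and for the cross-multiplication/inversion to preserve the inequality direction; if $D_n \leq 0$, no reduction in contention can ever make cloud computing beneficial and the statement is vacuous. I expect this nondegeneracy condition to be the main (and essentially only) obstacle, and it is the same implicit assumption already underlying Lemma \ref{lem:Given-the-strategies}; I would state it explicitly and then carry the inversion through under $D_n > 0$ to complete the proof.
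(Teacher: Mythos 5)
Your proposal is correct and takes exactly the route the paper intends: the paper states Lemma \ref{lem:WC1} without a written proof, as the direct analogue of Lemma \ref{lem:Given-the-strategies}, and your derivation --- reducing $K_n^c(\boldsymbol{a})\leq K_n^m$ to the rate lower bound and then inverting the affine contention throughput (\ref{eq:WC1}) to obtain $\mu_n(\boldsymbol{a})\leq T_n$ --- is precisely that argument, with the algebra checking out. Your explicit flagging of the nondegeneracy condition $D_n>0$ is a sensible touch that the paper leaves implicit (it is subsumed by the paper's later assumption $T_n\geq 0$).
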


By exploiting the threshold structure above and following the similar arguments in the proof of Theorem \ref{thm:The-general-decentralized}, we can also show that
the multi-user computation offloading game under the wireless contention
model is a potential game.
\begin{theorem}
\label{thm:WC2}The multi-user computation offloading game under the
wireless contention model is a potential game under the wireless contention
model with the potential function as given in (\ref{eq:WC2}), and
hence always has a Nash equilibrium and the finite improvement property.
\begin{eqnarray}
\Phi(\boldsymbol{a}) & = & \frac{1}{2}\sum_{i=1}^{N}\sum_{j\ne i}W_{i}W_{j}I_{\{a_{i}=a_{j}\}}I_{\{a_{i}>o\}}\nonumber \\
 &  & +\sum_{i=1}^{N}W_{i}T_{n}I_{\{a_{n}=0\}}.\label{eq:WC2}
\end{eqnarray}
\end{theorem}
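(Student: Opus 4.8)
The plan is to run the same three-case, better-response argument used in the proof of Theorem \ref{thm:The-general-decentralized}, with the aggregated contention weight $W_i$ playing precisely the role that the received power $q_i g_{i,s}$ played in the interference model. Because the potential function (\ref{eq:WC2}) is structurally identical to (\ref{eq:p4}) after this substitution, the entire argument should go through essentially verbatim once two facts are verified: that a user's cloud-computing overhead is monotone in the aggregated contention weight it receives, and that the weights bookkeep correctly in the potential function.

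First I would establish the monotonicity that replaces the monotonicity of $w\log_2(\cdot)$ used in the interference case. From (\ref{eq:WC1}), the expected throughput $r_n(\boldsymbol{a}) = R_n W_n/(W_n + \mu_n(\boldsymbol{a}))$ with $\mu_n(\boldsymbol{a}) = \sum_{i\in\mathcal{N}\backslash\{n\}: a_i = a_n} W_i$ is strictly decreasing in $\mu_n(\boldsymbol{a})$. Since $t_{n,off}^{c}$ and $e_n^{c}$ in (\ref{eq:c2})--(\ref{eq:c3}) are decreasing in $r_n(\boldsymbol{a})$, the cloud-computing overhead $K_n^{c}(\boldsymbol{a})$ is strictly increasing in $\mu_n(\boldsymbol{a})$. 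This lets me translate any strict decrease in a user's overhead into a strict inequality on aggregated weights, exactly as (\ref{eq:C11}) was obtained in the interference case, and it is also what underlies the threshold $T_n$ of Lemma \ref{lem:WC1}.

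Next I would suppose a user $k$ updates from $a_k$ to $a_k'$ with $Z_k(a_k,a_{-k}) > Z_k(a_k',a_{-k})$ and split into the three cases $a_k>0, a_k'>0$; $a_k=0, a_k'>0$; and $a_k>0, a_k'=0$. In case 1 the monotonicity above gives $\sum_{i\ne k: a_i=a_k} W_i > \sum_{i\ne k: a_i=a_k'} W_i$; substituting into (\ref{eq:WC2}) and noting that the second (local-computing) sum is unchanged because $I_{\{a_k=0\}}=0$ before and after, I obtain $\Phi(a_k,a_{-k}) - \Phi(a_k',a_{-k}) = W_k\left(\sum_{i\ne k} W_i I_{\{a_i=a_k\}} - \sum_{i\ne k} W_i I_{\{a_i=a_k'\}}\right) > 0$, mirroring (\ref{eq:lslsls1}). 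In case 2 the overhead decrease together with Lemma \ref{lem:WC1} gives $\sum_{i\ne k: a_i=a_k'} W_i < T_k$; the $k$-involving terms of the first sum vanish when $a_k=0$ and the second sum contributes $W_k T_k$, so $\Phi(a_k,a_{-k}) - \Phi(a_k',a_{-k}) = W_k T_k - W_k\sum_{i\ne k} W_i I_{\{a_i=a_k'\}} > 0$, mirroring (\ref{eq:lslsls2}). Case 3 follows by the symmetric argument to case 2.

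Combining the three cases shows that every unilateral better-response move strictly decreases $\Phi$, so the function in (\ref{eq:WC2}) is a potential function and the game is a potential game; existence of a Nash equilibrium and the finite improvement property then follow from \cite{monderer1996potential}. The main obstacle is conceptual rather than computational and is confined to the first step: one must confirm that, under the new throughput model (\ref{eq:WC1}), the overhead $K_n^{c}$ is monotone in $\mu_n$ in a way consistent with the threshold $T_n$ of Lemma \ref{lem:WC1}, so that the same cancellations in the potential function recur. Once that consistency is checked, every remaining algebraic step is identical to those already carried out for the interference model.
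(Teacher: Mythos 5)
Your proposal is correct and follows essentially the same route as the paper, which proves Theorem \ref{thm:WC2} simply by invoking the threshold structure of Lemma \ref{lem:WC1} and noting that, under the substitution $q_n g_{n,s} \to W_n$, the potential function (\ref{eq:WC2}) coincides with (\ref{eq:p4}) so the three-case argument of Theorem \ref{thm:The-general-decentralized} carries over verbatim. In fact your write-up is slightly more careful than the paper's, since you explicitly verify the one fact the paper leaves implicit: that $r_n(\boldsymbol{a})$ in (\ref{eq:WC1}) is strictly decreasing in the aggregated contention weight $\mu_n(\boldsymbol{a})$, so that $K_n^c$ is strictly increasing in $\mu_n$ and a strict overhead decrease translates into the strict weight inequality needed in each case.
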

Based on Lemma \ref{lem:WC1} and Theorem (\ref{thm:WC2}), we observe that
the multi-user computation offloading game under the wireless contention
model exhibits the same structural property as the case under the
wireless interference model. Moreover, by defining $q_{n}g_{n,s}=W_{n}$,
the potential function in (\ref{eq:WC2}) is the same as that in
(\ref{eq:p4}). Thus, by regarding the aggregated contention weights $\mu_{n}(\boldsymbol{a})=\sum_{i\in\mathcal{N}\backslash\{n\}:a_{i}=a_{n}}W_{i}$ as the received interference, we can apply the distributed computation offloading algorithm
in Section \ref{sec:Decentralized-Computation-Offloa-1} to achieve the Nash equilibrium, which possesses the
same performance and convergence guarantee for the case under the wireless contention model.
}

\section{Numerical Results}\label{sec:Numerical-Results}

\begin{figure*}[tt]
\begin{minipage}[t]{0.32\linewidth}
\centering
\includegraphics[scale=0.4]{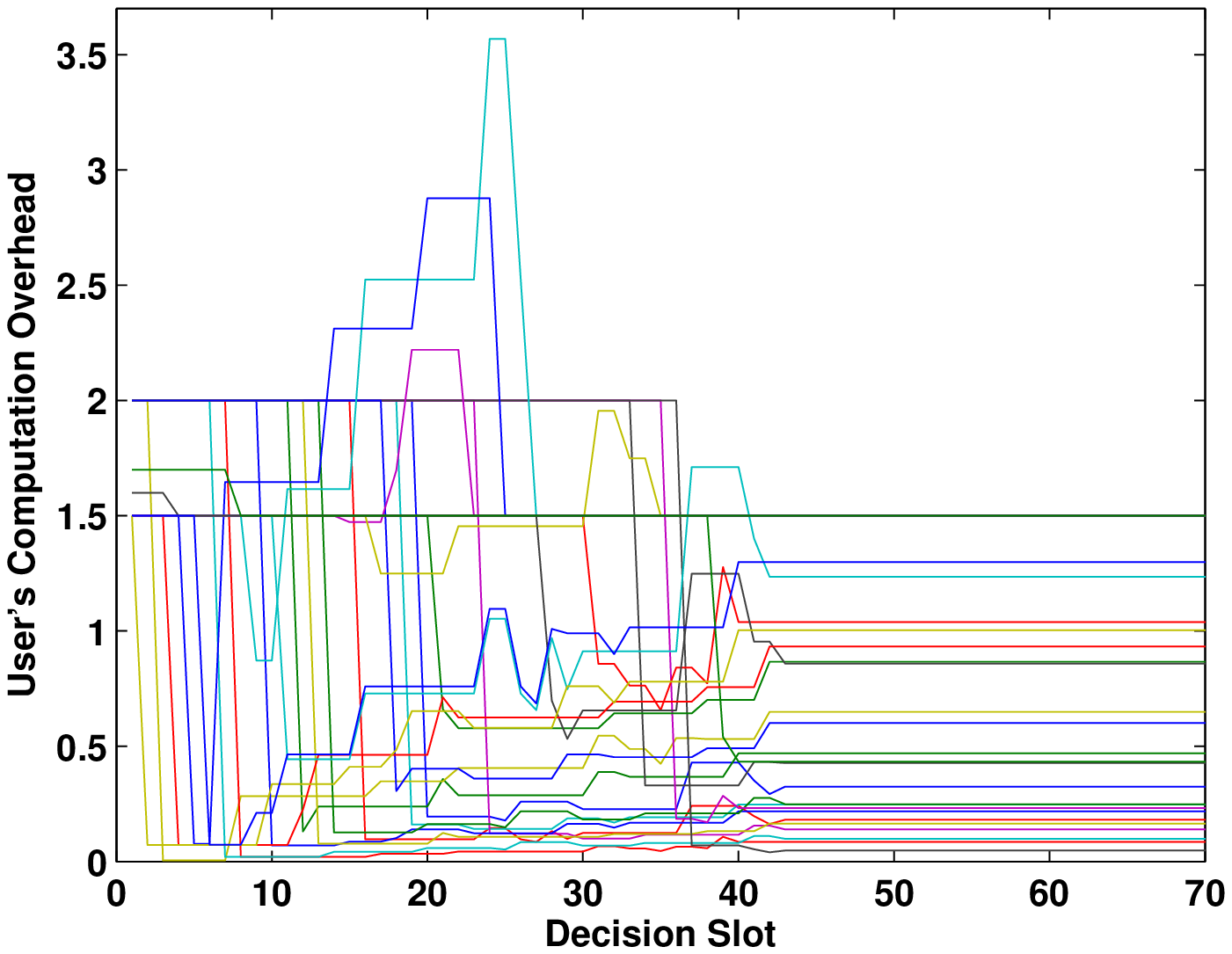}
\caption{\label{fig:UserCost} Dynamics of users' computation overhead}
\end{minipage}
\hfill
\begin{minipage}[t]{0.32\linewidth}
\centering
\includegraphics[scale=0.4]{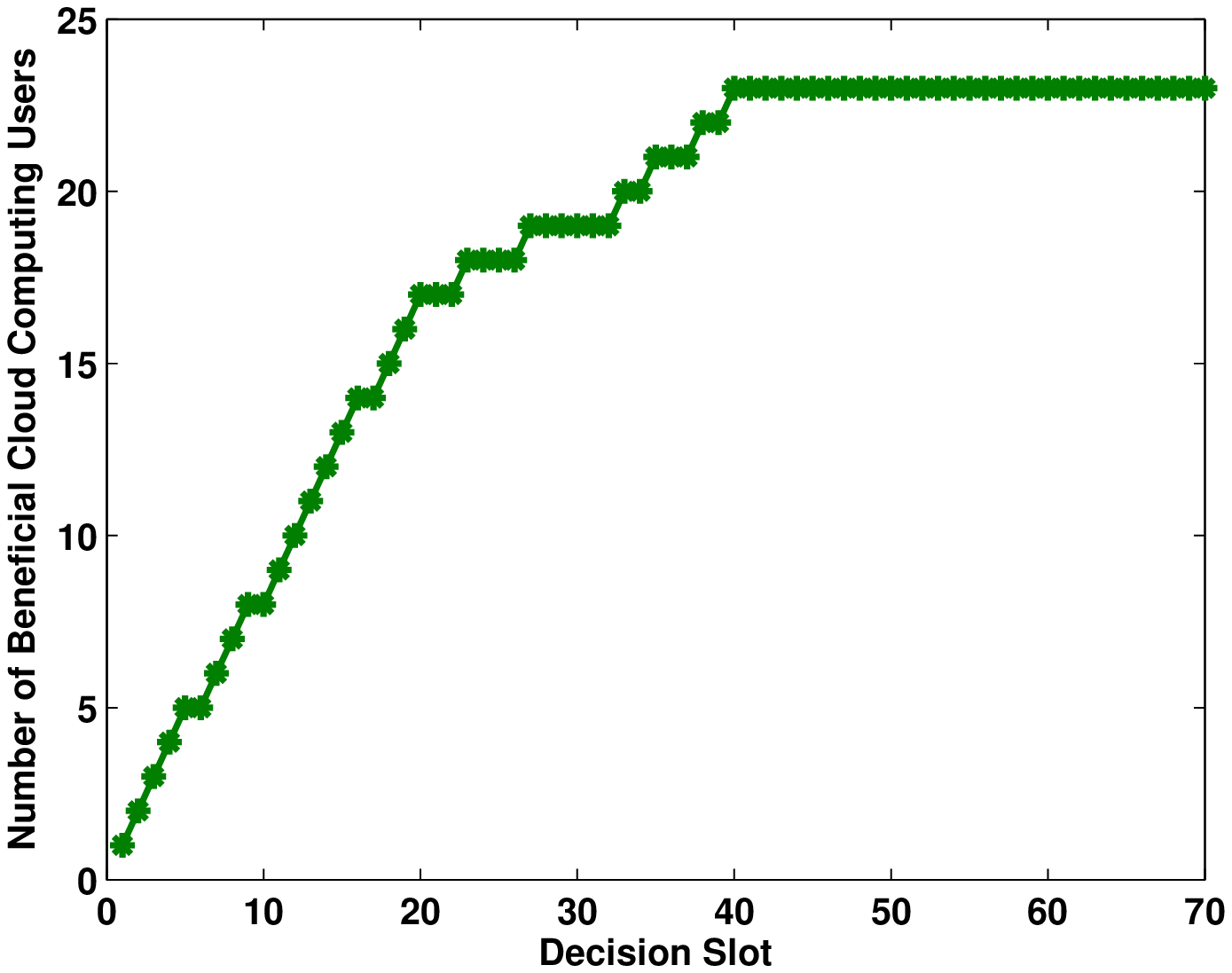}
\caption{\label{fig:BeneUser}Dynamics of the number of beneficial cloud computing users}
\end{minipage}
\hfill
\begin{minipage}[t]{0.32\linewidth}
\centering
\includegraphics[scale=0.4]{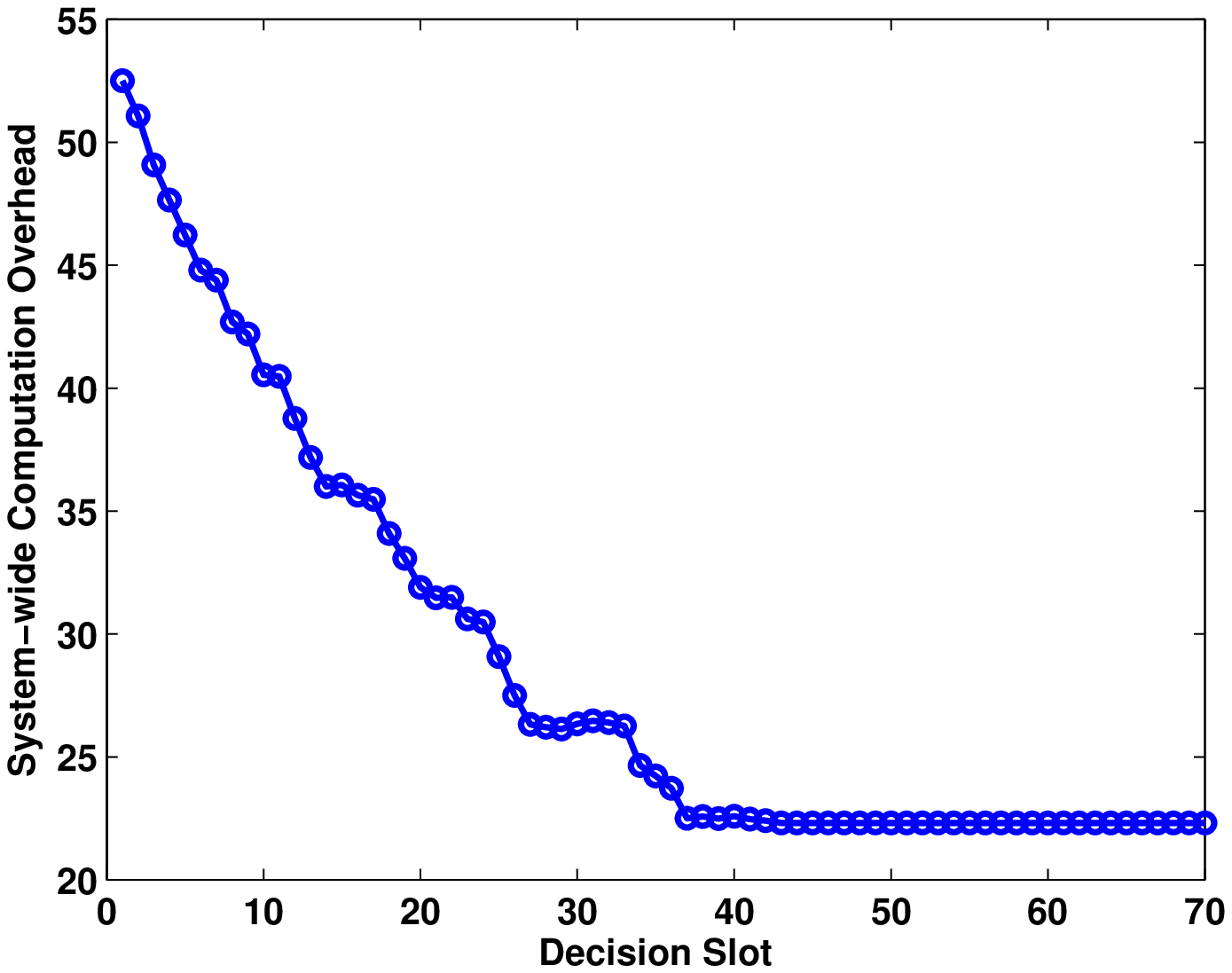}
\caption{\label{fig:SystemCost}Dynamics of system-wide computation overhead}
\end{minipage}
\end{figure*}

\begin{figure*}[tt]
\begin{minipage}[t]{0.32\linewidth}
\centering
\includegraphics[scale=0.4]{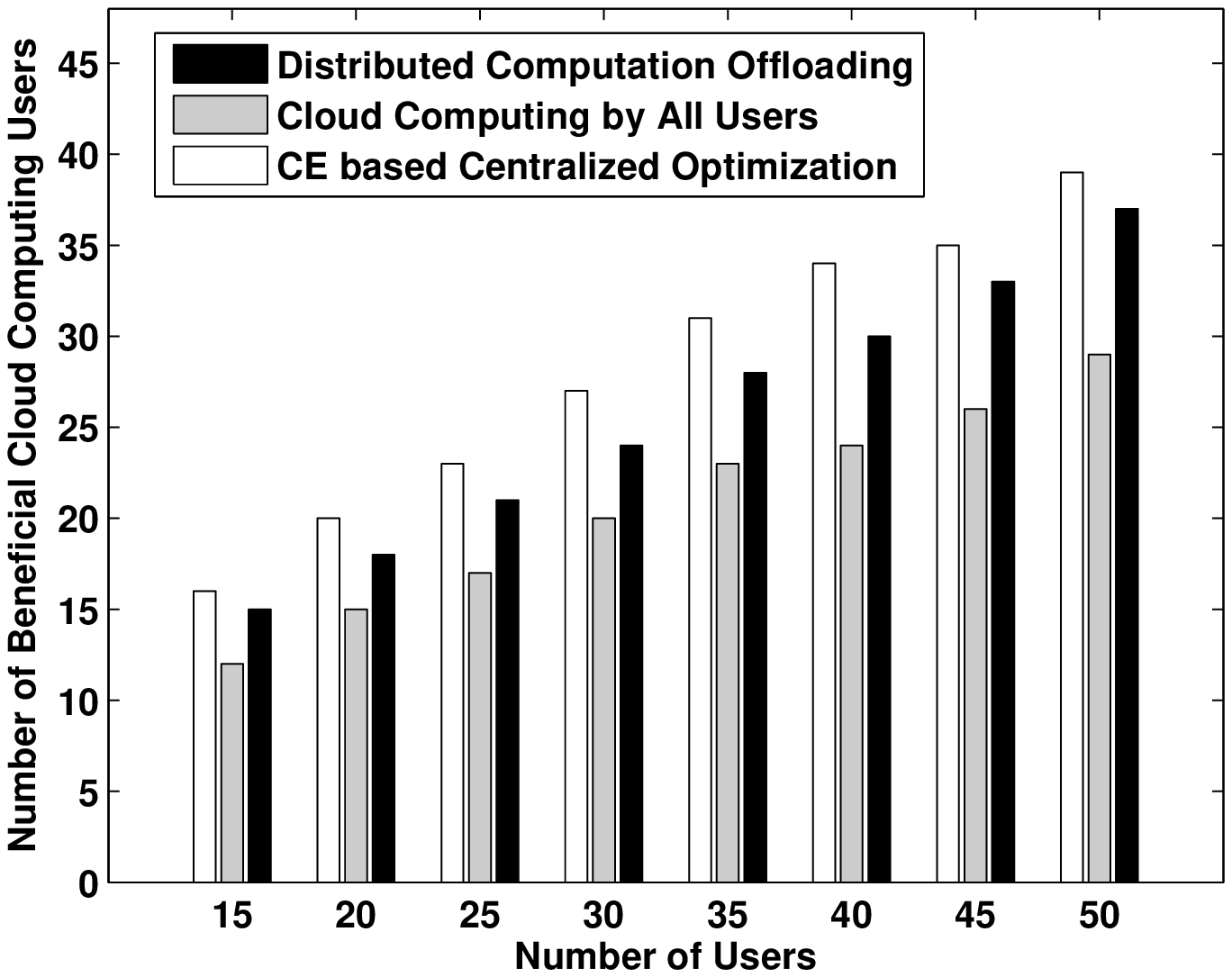}
\caption{\label{fig:Average1}Average number of beneficial cloud computing users with different number of users}
\end{minipage}
\hfill
\begin{minipage}[t]{0.32\linewidth}
\centering
\includegraphics[scale=0.4]{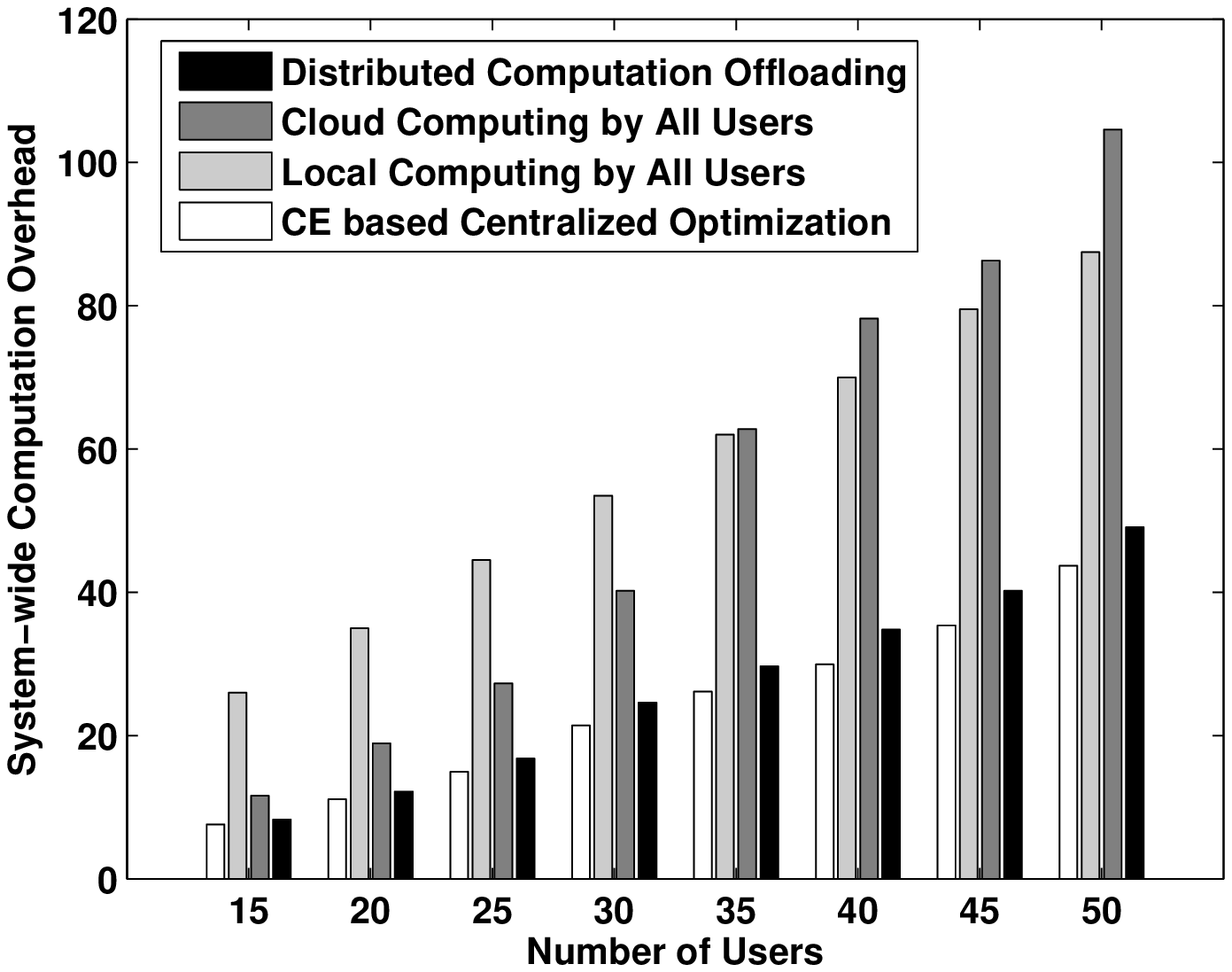}
\caption{\label{fig:Average2}Average system-wide computation overhead with different number of users}
\end{minipage}
\hfill
\begin{minipage}[t]{0.32\linewidth}
\centering
\includegraphics[scale=0.42]{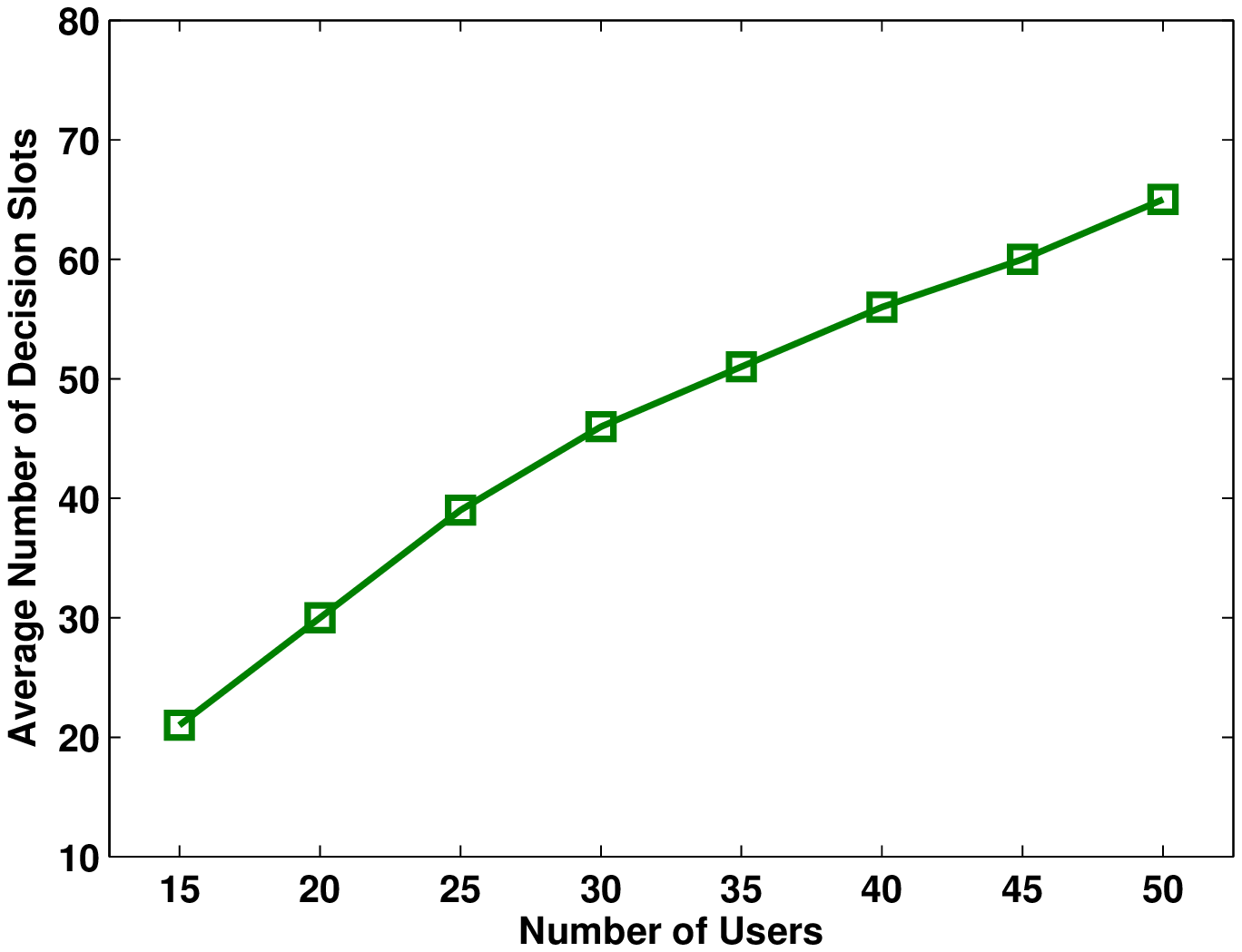}
\caption{\label{fig:Iterations}Average number of decision slots for convergence with different number of users}
\end{minipage}
\end{figure*}

In this section, we evaluate the proposed distributed computation
offloading algorithm by numerical studies. We first consider the scenario where the wireless small-cell base-station has a coverage range of $50$m \cite{quek2013small} and $N=30$ mobile device users are randomly scattered over the coverage region \cite{quek2013small}. The base-station consists of $M=5$ channels and the channel bandwidth
$w=5$ MHz. The transmission power $q_{n}=100$ mWatts and the background
noise $\varpi_{0}=-100$ dBm \cite{rappaport1996wireless}. According to the wireless interference
model for urban cellular radio environment \cite{rappaport1996wireless}, we set the channel gain $g_{n,s}=l_{n,s}^{-\alpha}$,
where $l_{n,s}$ is the distance between mobile device user $n$ and the wireless base-station and $\alpha=4$ is the path loss factor.

For the computation
task, we consider the face recognition application in \cite{soyata2012cloud},
where the data size for the computation offloading $b_{n}=5000$ KB
and the total number of CPU cycles $d_{n}=1000$ Megacycles.
The CPU computational capability $f^{m}_{n}$ of a mobile device
user $n$ is randomly assigned from the set $\{0.5,0.8,1.0\}$ GHz to account for the heterogenous computing capability of mobile devices,  
and the computational capability allocated for a user $n$ on the cloud is $f^{c}_{n}=10$ GHz \cite{soyata2012cloud}. For the decision weights of each user $n$ for both the computation time and energy, we set that $\lambda_{n}^{t}=1-\lambda_{n}^{e}$ and $\lambda_{n}^{e}$ is randomly assigned from the set $\{1, 0.5, 0\}$. In this case, if $\lambda_{n}^{e}=1$ ($\lambda_{n}^{e}=0$, respectively), a user $n$ only cares about the computation energy (computation time, respectively); if $\lambda_{n}^{e}=0.5$, then user $n$ cares both the computation time and energy.

We first show the dynamics of mobile device users' computation overhead $Z_{n}(\boldsymbol{a})$ by the
proposed distributed computation offloading algorithm in Figure
\ref{fig:UserCost}. We see that the algorithm can converge to a stable point (i.e., Nash equilibrium of the multi-user computation offloading game).  Figure \ref{fig:BeneUser} shows the dynamics of the achieved number of beneficial cloud computing users by the
proposed algorithm. It demonstrates that the algorithm can keep the number of beneficial cloud computing users in the system increasing and converge to an equilibrium. We further show the dynamics of the system-wide computation overhead $\sum_{n\in\mathcal{N}}Z_{n}(\boldsymbol{a})$ by the
proposed algorithm in Figure \ref{fig:SystemCost}. We see that the algorithm can also keep the system-wide computation overhead decreasing and converge to an equilibrium.

We then compare the distributed computation offloading
algorithm with the following solutions:

(1) \textbf{Local Computing by All Users}: each user chooses to compute its own task locally on the mobile phone.  This could correspond to the scenario that each user is risk-averse and would like to avoid any potential performance degradation due to the concurrent computation offloadings by other users.

(2) \textbf{Cloud Computing by All Users}: each user chooses to offload its own task to the cloud via a randomly selected wireless channel. This could correspond to the scenario that each user is myopic and ignores the impact of other users for cloud computing.

(3) \textbf{Cross Entropy based Centralized Optimization}: we compute the centralized optimum by the global optimization using Cross Entropy (CE) method,
which is an advanced randomized searching technique and has been shown to be efficient in finding near-optimal solutions to complex combinatorial optimization problems \cite{rubinstein2004cross}.

We run experiments with different number
of $N=15,...,50$ mobile device users \cite{quek2013small}, respectively. We repeat
each experiment $100$ times for each given user number $N$ and show the average number of beneficial cloud computing users and the average system-wide computation overhead
in Figures \ref{fig:Average1} and \ref{fig:Average2}, respectively. We see that, for the metric of the number of beneficial cloud computing users, the distributed computation offloading solution can achieve up-to $30\%$ performance improvement over the solutions by cloud computing by all users, respectively. For the metric of the system-wide computation overhead, the distributed computation offloading solution can achieve up-to $68\%$ and $55\%$, and $51\%$ overhead reduction over with the solutions by local computing by all users, and cloud computing by all users, respectively. Moreover, compared with the centralized optimal solution by CE method, the performance loss of the distributed computation offloading solution is at most $12\%$ and $14\%$, for the metrics of  number of beneficial cloud computing users and system-wide computation overhead, respectively. This demonstrates the efficiency of the proposed distributed computation offloading algorithm. Note that for the distributed computation offloading algorithm, a mobile user makes the computation offloading
decision locally based on its local parameters. While for CE based centralized optimization, the complete information is required and hence all the users need to report all their local parameters to the cloud. This would incur high system overhead for massive information collection and may raise the privacy issue as well. Moreover, since the mobile devices are owned by different individuals and they may pursue different interests, the users may not have the incentive to follow the centralized optimal solution. While, due to the property of Nash equilibrium, the distributed computation offloading solution can ensure the self-stability such that no user has the incentive to deviate unilaterally.

We next evaluate the convergence time of the
distributed computation offloading algorithm in Figure \ref{fig:Iterations}. It
shows that the average number of decision slots for convergence increases (almost) linearly as the
number of mobile device users $N$ increases. This demonstrates that the distributed computation offloading algorithm converges in a fast manner and scales well with the size of mobile device users in practice\footnote{For example, the length of a slot is at the time scale of microseconds in LTE system \cite{innovations2010lte} and hence the convergence time of the proposed algorithm is very short.}.

\section{Related Work}\label{sec:Related-Work}
Many previous work has investigated the single-user computation offloading problem (e.g., \cite{barbera2013offload,rudenko1998saving,huerta2008adaptable,xian2007adaptive,huang2012dynamic,wen2012energy,wu2013making}).  Barbera \emph{et al.} in \cite{barbera2013offload} showed by realistic measurements that the wireless access plays a key role in affecting the performance of mobile cloud computing. Rudenko \emph{et al.} in \cite{rudenko1998saving} demonstrated by experiments that significant energy can be saved by computation offloading.  Gonzalo \emph{et al.}  in \cite{huerta2008adaptable} developed an adaptive offloading algorithm based on both the execution history of applications and the current system conditions. Xian \emph{et al.}  in \cite{xian2007adaptive} introduced an efficient timeout scheme for computation offloading to increase the energy efficiency on mobile devices. Huang \emph{et al.} in \cite{huang2012dynamic} proposed a Lyapunov optimization based dynamic offloading algorithm to improve the mobile cloud computing performance while meeting the application execution time. Wen \emph{et al.} in \cite{wen2012energy} presented an efficient offloading policy by jointly configuring the clock frequency in the mobile device and scheduling the data transmission to minimize the energy consumption. Wu \emph{et al.} in \cite{wu2013making} applied the alternating renewal process to model the network availability and developed offloading decision algorithm accordingly.

To the best of our knowledge, only a few works have addressed the computation offloading problem under the setting of multiple mobile device users  \cite{barbarossa2013joint}.  Yang \emph{et al.} in \cite{yang2013framework} studied the scenario that multiple users share the wireless network bandwidth, and solved the problem of maximizing the mobile cloud computing performance by a centralized heuristic genetic algorithm. Our previous work in \cite{Chen2014DCO} considered the multi-user computation offloading problem in a single-channel wireless setting, such that each user has a binary decision variable (i.e., to offload or not). Given the fact that base-stations in most wireless networks are operating in the multi-channel wireless environment, in this paper we study the generalized multi-user computation offloading problem in a multi-channel setting, which results in significant differences in analysis. For example, we show the generalized problem is NP-hard, which is not true for the single-channel case. We also investigate the price of anarchy in terms of two performance metrics and show that the number of available channels can also impact the price of anarchy (e.g., Theorem \ref{thm:PoA2}). We further derive the upper bound of the convergence time of the computation offloading algorithm in the multi-channel environment.   Barbarossa \emph{et al.} in \cite{barbarossa2013joint} studied the multi-user computation offloading problem in a multi-channel wireless environment, by assuming that the number of wireless access channels is greater than the number of users such that each mobile user can offload the computation via a single  orthogonal channel independently without experiencing any interference from other users. In this paper we consider the more practical case that the number of wireless access channels is limited and each user mobile may experience interference from other users for computation offloading.


\section{Conclusion}\label{sec:Conclusion}
In this paper, we propose a game theoretic approach for the computation offloading decision making problem among multiple mobile device users for  mobile-edge cloud computing. We formulate the problem as as a  multi-user computation offloading game and show that the game always admits a Nash equilibrium. We also design a distributed computation offloading algorithm that can achieve a Nash equilibrium, derive the upper bound of convergence time, and quantify its price of anarchy. Numerical results demonstrate that the proposed algorithm achieves superior computation offloading performance and scales well as the user size increases.

For the future work, we are going to consider the more general case that mobile users may depart and leave dynamically within a computation offloading period. In this case, the user mobility patterns will play an important role in the problem formulation. Another direction is to study the joint power control and offloading decision making problem, which would be very interesting and technically challenging. 
\bibliographystyle{ieeetran}
\bibliography{MobileCloud}

\end{document}